\definecolor{darkred}{RGB}{100,0,0}
\definecolor{darkgreen}{RGB}{0,100,0}
\definecolor{darkblue}{RGB}{0,0,150}
\definecolor{softgreen}{RGB}{230, 245, 230}
\newcommand{\ind}{\mathbbm{1}}
\definecolor{darkred}{RGB}{100,0,0}
\definecolor{darkgreen}{RGB}{0,100,0}
\definecolor{darkblue}{RGB}{0,0,150}
\definecolor{softgreen}{RGB}{230, 245, 230}
\newcommand{\E}{E}
\newcommand{\PP}{\mathbb{P}}
\newcommand{\M}{\mathbb{M}}
\newcommand{\A}{\mathcal{A}}
\newcommand{\B}{\mathcal{B}}
\newcommand{\Ec}{\mathcal{E}}
\newcommand{\dd}{d}   
\newcommand{\btr}{b^*}  
\newcommand{\thetatr}{\theta}  
\newcommand{\tauQ}{\tau}  
\newcommand{\etaD}{\eta_D} 
\newcommand{\tauC}{\tau_{\max}}  
\newcommand{\httc}{\hat\theta}  
\newcommand{\sigmac}{\sigma}
\newcommand{\convp}{\stackrel{p}{\rightarrow}}  
\newcommand{\convd}{\stackrel{d}{\rightarrow}}  
\newcommand{\bigCI}{\mathrel{\text{\scalebox{1.07}{$\perp\mkern-10mu\perp$}}}}
\newtheorem{theorem}{Theorem}
\newtheorem{proposition}{Proposition}
\newtheorem{lemma}{Lemma}
\newtheorem{assumption}{Assumption}
\newtheorem{remark}{Remark}
\newcommand*{\addFileDependency}[1]{
	\typeout{(#1)}
	\@addtofilelist{#1}
	\IfFileExists{#1}{}{\typeout{No file #1.}}
}
\title{\LARGE Proximal Survival Analysis for Dependent Left Truncation}
\author[1]{\small Yuyao Wang\thanks{Current address: Department of Public Health and Health Sciences, Northeastern University, 360 Huntington Ave, Boston, MA 02115, USA. Email: yuya.wang@northeastern.edu}}
\author[2]{Andrew Ying\thanks{To whom correspondence should be addressed. Email: aying9339@gmail.com}}
\author[1,3]{Ronghui Xu}
\affil[1]{\small Department of Mathematics, University of California San Diego}
\affil[2]{\small Irvine, CA}
\affil[3]{\small Herbert Wertheim School of Public Health and Halicioglu Data Science Institute,\\ University of California San Diego}
\date{ }
\begin{document}

\begin{singlespace}
\vspace{-3em}
\maketitle
\vspace{-3em}
\begin{abstract}
In prevalent cohort studies with delayed entry, time-to-event outcomes are often subject to left truncation where only subjects that have not experienced the event at study entry are included, leading  to selection bias. Existing methods for handling left truncation mostly rely on the (quasi-)independence assumption or the weaker conditional (quasi-)independence assumption which assumes that conditional on observed covariates, the left truncation time and the event time are independent on the observed region. In practice, however, our analysis of the Honolulu Asia Aging Study (HAAS) suggests that the conditional quasi-independence assumption may fail because measured covariates often serve only as imperfect proxies for the underlying mechanisms, such as latent health status, that induce dependence between truncation and event times. To address this gap, we propose a proximal weighting identification framework that admits these factors may not be fully observed. We then construct an estimator based on the framework and study its asymptotic properties. We examine the finite sample performance of the proposed estimator by comprehensive simulations, and apply it to analyzing the cognitive impairment-free survival probabilities using data from the Honolulu Asia Aging Study.
\end{abstract}
\vskip .2in
\textbf{Keywords:}
Proxies; Proximal inference; Selection bias; Survival analysis; Unmeasured dependence.
\end{singlespace}

\newpage
\doublespacing

\section{Introduction}

Time-to-event outcomes are of interest in many clinical and epidemiological studies. Ideally, subjects would be enrolled at the time origin and followed long enough to observe the event of interest. However, this is often not feasible in practice. In many studies, there is unavoidable delayed entry, and only subjects who have not yet experienced the event are included.
For example, in aging studies, age is often the time scale of interest, for which birth is the time origin. However, following individuals from birth is extremely costly, especially when the events of interest, such as cognitive impairment or death, typically occur in late life. As another example, for pregnancy studies, women usually enroll in the study after the clinical recognition of their pregnancies, which often occurs weeks or even months after they got pregnant. Therefore, women with early pregnancy losses tend not to be included \citep{xu:cham, ying2024causal}. 
When only subjects whose event times greater than their left truncation times (i.e., study enrollment times in the above examples) are included, the time-to-event outcome is said to be subject to left truncation. In this setting, the observed data is a biased sample from the population of interest because subjects with later event times are preferentially selected. 
Without properly handling this selection bias, analysis based on the biased sample will result in biased estimates and invalid inferences, which may lead to misleading clinical or policy decisions and waste of resources.

To address the selection bias from left truncation, the random left truncation or the slightly weaker quasi-independent left truncation is conventionally assumed for estimating marginal quantities like the marginal survival probabilities \citep{woodroofe1985estimating, wang1986asymptotic, wang1989semiparametric, wang1991nonparametric, gross1996weighted, gross1996nonparametric, shen2010semiparametric, rabhi2021semiparametric}, where the left truncation time is assumed to be independent from the event time on the observed data region.
In regression settings, such as under Cox proportional hazards models, the requirement of the independence between the left truncation time and the event time can be relaxed to the conditional independence given covariates included in the regression model \citep{wang1993statistical, shen2009analyzing, qin2010statistical, qin2011maximum}. Such conditional independence assumption is usually referred to as covariate dependent left truncation. 
For estimating a marginal estimand under covariate dependent left truncation, inverse probability weighting \citep{vakulenko2022nonparametric} and doubly robust approaches \citep{wang2024doubly, wang2024learning} were recently developed.  

In practice, however, the conditional quasi-independence assumption may fail. Our investigation into the Honolulu Asia Aging Study (HAAS, 1991-2012) \citep{p2012honolulu} suggests that the dependence between the study entry age and the event time may be induced by latent factors that are not fully captured by measured covariates. In the HAAS cohort, when analyzing cognitive impairment-free survival, the time-to-event is age to cognitive impairment or death. Since only subjects who were alive and had not developed cognitive impairment at HAAS entry were included, the time-to-event is left truncated by age at study entry. Latent factors such as socioeconomic status, health-seeking behavior, and overall health status are likely associated with both study entry age and  risks of mortality and cognitive impairment. Since these latent factors are difficult to measure directly, the standard (conditional) independence assumption required by existing approaches are likely violated, potentially leading to biased clinical conclusions.

This practical challenge motivates us to relax the the conditional independence assumption for left truncation by developing a proximal identification framework. Although latent factors like health status are unobserved, many measured covariates may serve as their imperfect proxies. For example, grip strength can be viewed as a proxy for overall physical frailty, while education and lifestyle variables (such as alcohol and cigarette consumptions) reflect socioeconomic status and health-seeking behavior.

Inspired by the proximal causal inference framework \citep{miao2018identifying, tchetgen2024introduction} and recent developments in proximal survival analysis for dependent censoring \citep{ying2024proximal}, we propose a proximal weighting identification framework for dependent left truncation. To our best knowledge, this is the first work to utilize proxy variables to handle unmeasured dependence in the context of left truncation. 
Following prior work in proximal inference, we 
assume that practitioners can correctly classify the measured covariates into three types: 
a) covariates that are directly associated with both the left truncation time and the event time; b) truncation proxies that are potentially (not necessarily) associated with the left truncation time and are associated with the event time only through the latent factors, for which the covariates are proxies; c) event time proxies that are potentially (not necessarily) associated with the event time and are associated with the left truncation time only through latent factors, for which the covariates are proxies.
In Section \ref{sec:preliminary}, we formally define the three types of proxies and provide a concrete example using HAAS data. 
Our framework leverages these proxies to obtain a truncation-inducing bridge process, defined as a solution to a set of integral equations.
We refer interested readers to \citet{tchetgen2024introduction} for an overview of the proximal framework.

Our framework leverages the above three types proxies to obtain a truncation-inducing bridge process which is defined as a solution to a set of integral equations indexed by time. Based on the bridge process, we propose a nonparametric identification and construct an 
estimator after the bridge process has been estimated. 
The  bridge process estimation is flexible and allows practitioners to incorporate the estimation approaches they prefer.
As an illustration, we consider a compatible semiparametric working model with additive form for the bridge process, under which a closed-form expression exists for the bridge process estimator.
We investigate generic assumptions on the bridge process estimator under which the proposed estimator for the parameter of interest is consistent and asymptotically normal. 
We further extend the proposed identification and estimation framework to allow right censoring.
The finite sample performance of the proposed estimator is studied by Monte Carlo simulations, and we apply the proposed method to analyze cognitive impairment-free survivals using data collected from HAAS. 

The remainder of the paper is organized as follows. 
Section \ref{sec:preliminary} introduces the problem setup. 
Section \ref{sec:identification} introduces the key assumptions 
and the proposed identification.
Section \ref{sec:estimation_asymptotics} discusses the estimation and the asymptotic properties of the proposed estimator under generic assumptions. 
Section \ref{sec:simu} includes simulation studies, and Section \ref{sec:application} includes an application using the HAAS data.
Section \ref{sec:discussion} concludes with a discussion.

\section{Problem setup}\label{sec:preliminary}
  
Let $Q^*$, $T^*$, $C^*$ denote the left truncation time, the event time of interest, and the right censoring time, respectively, in the full data population, that is, the population before left truncation. 
{A subject is observed only if $Q^*<T^*$. 
We use supscript asterisk `*' to denote variables in the full data population, that is, without left truncation; and without asterisk denote variables in the population under left truncation.}
We are interested in estimating the expectation of an arbitrarily transformed event time in the full data population:
\begin{align*}
    \theta = \E\{\nu(T^*)\},
\end{align*}
where $\nu$ is a known real-valued bounded transformation. For example, when $\nu(t) = \ind(t>t_0)$, $\theta$ is the marginal survival probability; when $\nu(t) = \min(t,t_0)$, $\theta$ is the restricted mean survival time. 
Both are commonly considered estimands in the time-to-event literature.
For the HAAS data example, practitioners may be interested in analyzing the cognitive impairment-free survival probabilities or the restrict mean survival time for time to cognitive impairment or death, whichever happens first.

We allow $Q^*$ and $T^*$ to be dependent, and
assume that an unmeasured latent factor  $U^*$, together with the measured covariates, explains the dependence between $Q^*$ and $T^*$. 
Suppose that the measured covariates can be classified into the three types of proxies as described in the introduction. 
Specifically,  a) $Z^*$ contains measured covariates that are directly associated with both $Q^*$ and $T^*$; b) $W_1^*$ contains the truncation proxies that may (but not necessarily) be associated with $Q^*$ but are only associated with $T^*$ through $(Z^*, U^*)$; and c) $W_2^*$ contains the event time proxies that may (but not necessarily) be associated with $T^*$ but are only associated with $Q^*$ through $(Z^*, U^*)$.
We formalize this in Assumption \ref{ass:proximal_indep} below.
\begin{assumption}[Proximal independence]\label{ass:proximal_indep}
    $(W_1^*,Q^*) \bigCI (W_2^*,T^*) \mid Z^*,U^*$. 
\end{assumption}
Assumption \ref{ass:proximal_indep} requires that $(Z^*,U^*)$ perfectly captures the dependence between $Q^*$ and $T^*$.
In addition, it requires that when conditioning on $(Z^*,U^*)$, the truncation-inducing proxy $W_1^*$ does not directly affect $T^*$ and the event-inducing proxy $W_2^*$ does not directly affect $Q^*$. 
The following Figure \ref{fig:DAG} shows an example DAG where Assumption \ref{ass:proximal_indep} holds.
\begin{figure}[h]
\centering
 \resizebox{180pt}{!}{%
   \begin{tikzpicture}[state/.style={circle, draw, minimum size=1.1cm}]
  \def\Ax{0}
  \def\Ay{0}
  \def\offset{2.5}
  \def\Bx{\Ax+5}
  \def\By{\Ay}
  \node[state,shape=circle,draw=black] (Z) at (\Bx-4,\Ay+1.5) {$W_1^*$};
  \node[state,shape=circle,draw=black] (Y) at (\Bx,\By) {$T^*$};
  \node[state,shape=circle,draw=black] (A) at (\Bx-2.5,\By) {$Q^*$};
  \node[state,shape=circle,draw=black] (X) at (\Bx-1.25,\By+1.5) {$Z^*$};
    \node[state,shape=circle,draw=black] (U) at (\Bx-1.25,\By+3.5) {$U^*$};
  \node[state,shape=circle,draw=black] (W) at (\Bx+1.5,\Ay+1.5) {$W_2^*$};

  \draw [dashed] (X) to [bend left=0] (W);
  \draw [-latex] (X) to [bend left=0] (A);
  \draw [dashed] (X) to [bend left=0] (Z);
  \draw [-latex] (X) to [bend left=0] (Y);
  \draw [-latex, dashed] (Z) to [bend left=0] (A);
  \draw [-latex, dashed] (W) to [bend left=0] (Y);

  \draw [-latex] (U) to [bend left=0] (A);
  \draw [-latex] (U) to [bend left=0] (Z);
  \draw [-latex] (U) to [bend left=0] (Y);
  \draw [-latex] (U) to [bend left=0] (W);
  \draw [-latex] (U) to [bend left=0] (X);

\end{tikzpicture}
 }
\vskip -.1in
\caption{A DAG illustrating a situation where Assumption \ref{ass:proximal_indep} holds. Dashed lines denote causal relationships that may be present.}\label{fig:DAG}
\end{figure}
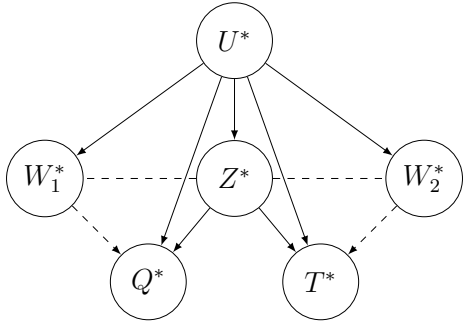

Below we give a concrete example for the three types of proxies using HAAS data.
As discussed in the introduction, latent factors such as socioeconomic status, health-seeking behavior, and overall health status are likely to be associated with both study entry age and age to cognitive impairment or mortality. 
Among the measured covariates, grip strength can be viewed as a proxy for overall health status. 
Since grip strength is unlikely to be the direct cause of cognitive impairment or death but rather associated with them through the latent overall health status, grip strength may be considered as a type `b' proxy.
On the other hand, education and lifestyle variables such as alcohol and cigarettes consumption are proxies for subjects' socioeconomic status and health-seeking behavior.
It is known that education and lifestyle variables may affect cognitive impairment and mortality, but they are unlikely to be direct associates with birth cohort or calendar time at study enrollment (and therefore left truncation time), except through the latent socioeconomic status and health-seeking behavior.
Therefore, they may be viewed as type `c' proxies. 
Other measured variables such as {\it APOE} genotype, systolic blood pressure, and heart rate may be considered as type `a' proxies. Figure \ref{fig:DAG_HAAS} illustrates the relationships among these variables.

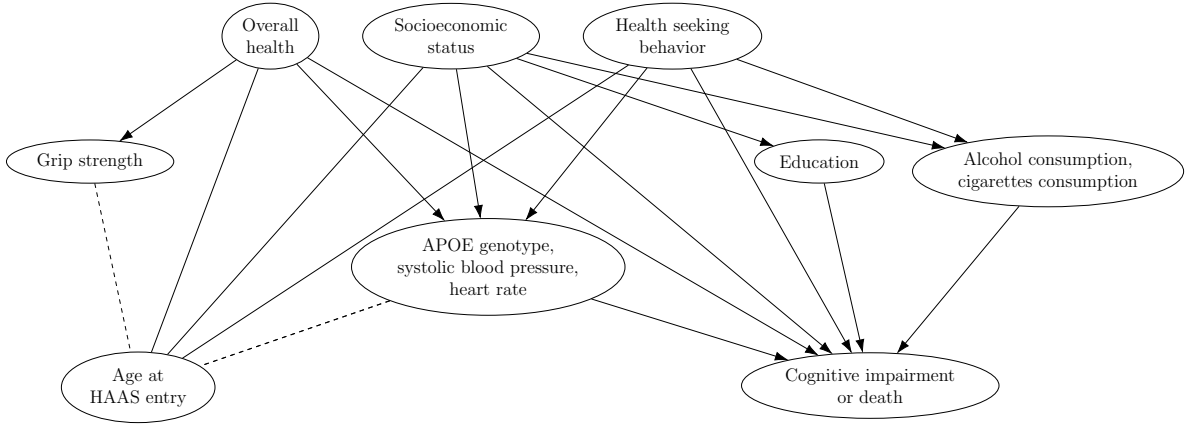
\begin{figure}[h]
\centering
 \resizebox{450pt}{!}{%
   \begin{tikzpicture}[
  node/.style={
    draw,
    ellipse,            
    align=center,
    minimum width=1.6cm,
    minimum height=1cm
  }
]
    \node[node] (UA) {Overall \\ health};
  \node[node, right=of UA] (UB) {Socioeconomic \\ status};
  \node[node, right=of UB] (UC) {Health seeking \\ behavior};

  \node[node, below left=2cm and 2cm of UA] (WA) {Grip strength};
  \node[node, below right=2cm and 6cm of UB] (WB) {Education};
  \node[node, below right=2cm and 5cm of UC] (WC) {Alcohol consumption,\\ cigarettes consumption};
  \node[node, below right=4cm and 2cm of UA] (Z) {APOE genotype, \\ systolic blood pressure, \\  heart rate};

  \node[node, below left=7cm and 1cm  of UA] (Q) {Age at \\ HAAS entry};
  \node[node, below right=7cm and 1cm of UC] (T) {Cognitive impairment\\ or death};

   \draw[solid] (UA) to[bend left=0] (Q);
   \draw[-{Latex[length=3mm, width=2mm]}] (UA) to[bend left=0] (T);
   \draw[solid] (UB) to[bend left=0] (Q);
   \draw[-{Latex[length=3mm, width=2mm]}] (UB) to[bend left=0] (T);
   \draw[solid] (UC) to[bend left=0] (Q);
   \draw[-{Latex[length=3mm, width=2mm]}] (UC) to[bend left=0] (T);

  \draw[-{Latex[length=3mm, width=2mm]}] (UA) to[bend left=0] (WA);
  \draw[-{Latex[length=3mm, width=2mm]}] (UB) to[bend left=0] (WB);
  \draw[-{Latex[length=3mm, width=2mm]}] (UC) to[bend left=0] (WC);
  \draw[-{Latex[length=3mm, width=2mm]}] (UB) to[bend left=0] (WC);
  
  \draw[-{Latex[length=3mm, width=2mm]}] (WB) to[bend left=0] (T);
  \draw[-{Latex[length=3mm, width=2mm]}] (WC) to[bend left=0] (T);
  \draw[-{Latex[length=3mm, width=2mm]}] (Z) to[bend left=0] (T);
  \draw[dashed] (Z) to[bend left=0] (Q);
  \draw[dashed] (WA) to[bend left=0] (Q);
  \draw[dashed] (Z) to[bend left=0] (Q);

  \draw[-{Latex[length=3mm, width=2mm]}] (UA) to[bend left=0] (Z);
  \draw[-{Latex[length=3mm, width=2mm]}] (UB) to[bend left=0] (Z);
  \draw[-{Latex[length=3mm, width=2mm]}] (UC) to[bend left=0] (Z);
\end{tikzpicture}
 }
\vskip -.1in
\caption{A DAG illustration for HAAS data; dashed lines denote causal relationships that might be present; edges without arrows denote relationships that both directions are possible; the edges for relationships among the unmeasured latent factors and among the measured covariates are omitted for simplicity.}\label{fig:DAG_HAAS}
\end{figure}

We also make the following positivity assumption.


\begin{assumption}[Latent positivity for truncation]\label{ass:positivity}
    $\PP(Q^*<t \mid Z^*=z, U^*=u) > 0$ for all $(t,z,u)$ in the support of $(T^*,Z^*,U^*)$. 
\end{assumption}
Assumption \ref{ass:positivity} ensures that subjects across all levels of $(Z^*,U^*)$ have the opportunity to be observed.
It coincides with the positivity assumption in \citet{vakulenko2022nonparametric} when $U^*, W_1^*, W_2^*$ are empty sets.
In HAAS example, Assumption \ref{ass:positivity} requires that for any realization of the latent factors and type `a' proxies, the subject has a positivity chance of surviving long enough without cognitive impairment to enter the study.

We will use $\tauQ$ to denote the supremum for the support of $Q^*$, and
$G(t|z,u)$ to denote the conditional cumulative distribution function (CDF) of $Q^*$ given $(Z^*, U^*)$. 
We will use the notation $a\wedge c = \min(a,c)$ and $a\vee c = \max(a,c)$ for any real numbers $a$ and $c$.


\section{Identification via truncation-inducing bridge process}\label{sec:identification}

To better illustrate the proposed framework for handling dependent left truncation, we will first focus on the setting without right censoring and then extend the framework to the more practical setting with right censoring. 

\subsection{Identification under no censoring}

In proximal causal inference and proximal survival analysis for handling dependent right censoring,  nonparametric identification is achieved via the so-called bridge functions, or bridge processes in  time-to-event settings, which are solutions of inverse problems that are formulated as integral equations leveraging the different types of proxies. 
Following the backwards counting process notation for $Q$ \citep{bickel1993efficient}, let
$\bar N_Q(t) = \ind(t\leq Q<T)$. We consider the following truncation-inducing bridge process.

\begin{assumption}[Existence of a truncation-inducing bridge process]\label{ass:trunc_bridge}
    There exists a bounded truncation-inducing bridge process $b(t,w_1,z)$ 
    satisfying 
    \begin{align}
        \E\{\dd b(t, W_1,Z) - \dd \bar N_Q(t) b(t,W_1,Z) \mid Q\leq t<T, W_2, Z\} = 0,  \label{eq:trunc_bridge_def}
    \end{align}
    with the initial condition 
    \begin{align}
        b(t, W_1,Z) = 1,\quad \text{for all  } t\geq \tauQ. \label{eq:trunc_bridge_initial_condi}
    \end{align}
\end{assumption}

We note that \eqref{eq:trunc_bridge_def} in Assumption \ref{ass:trunc_bridge} is expressed in terms of the increments of stochastic processes $b(t,W_1,Z)$ and $\bar N_Q(t)$, and it is equivalent to 
\begin{align}
    \E\left[ \int_Q^T \varphi(t,W_2,Z) \{\dd b(t, W_1,Z) - \dd \bar N_Q(t) b(t,W_1,Z)\} \right] = 0  \label{eq:trunc_bridge_equation}
\end{align}
for any integrable function $\varphi(t,W_2,Z)$.

Equation \eqref{eq:trunc_bridge_def} resembles the equation that defines the censoring-inducing bridge process in \citet{ying2024proximal} for handling dependent right censoring 
(see Appendix \ref{app:plausibility} for a detailed comparison).
Similar equations have also been used in proximal causal inference to define bridge functions, 
for which the existence of a solution is given by Picard's theorem under certain regularity conditions \citep{miao2018identifying, ying2023proximal, cui2024semiparametric}. 
Unlike the bridge functions in proximal causal inference, the bridge process in this work (and in \citet{ying2024proximal}) is a stochastic process that involves the time dimension. 
Investigating the existence of such a process for left truncated data would be of interest in probability theory but is beyond the scope of the current paper. As in \citet{ying2024proximal}, we take the existence of the truncation-inducing bridge process as a primitive assumption.

To assure the reader, we provide in Appendix \ref{app:plausibility} a concrete parametric data generating mechanism where such a bridge process exists; note that, however, our framework does not rely on this parametric data generating mechanism.

\begin{assumption}[Completeness] \label{ass:completeness_truncbridge}
    For any $t>0$ and any integrable function $\zeta(t,Z,U)$, 
    $
    \E\left[\zeta(t,Z,U) \mid Q\leq t<T, W_2,Z \right] = 0
    $
    if and only if $\zeta(t,Z,U) = 0$ a.s..
\end{assumption}
Assumption \ref{ass:completeness_truncbridge} requires $W_2$ to be relevant to $U$ and that there is enough variability in $W_2$ compared to the variability in $U$. 
Note that 
$ \E\left[\zeta(t,Z,U) \mid Q\leq t<T, W_2, Z \right] = $ \\ $\E\left[\zeta(t,Z^*,U^*) \mid Q^*\leq t<T^*, W_2^*, Z^* \right]$, 
so Assumption \ref{ass:completeness_truncbridge} excludes the independence of $U^*$ and $W_2^*$ conditional on $(Z^*, Q^*\leq t<T^*)$ for any $t>0$. 
For the case where $W_2$ and $U$ are categorical variables, Assumption \ref{ass:completeness_truncbridge} requires that the number of categories of $W_2$ to be at least as many as that of $U$. 
Similar completeness assumptions were also considered in the proximal inference literature for treatment effect estimation and for handling right censoring; see for example \citet{tchetgen2024introduction}, \citet{cui2024semiparametric}, \citet{ying2023proximal}, \citet{ying2024proximal} and the references therein for great discussions. 
For the HAAS example, Assumption \ref{ass:completeness_truncbridge} requires that the education and lifestyle variables have sufficient variability and contain enough information about the latent socioeconomic status, health-seeking behavior, and overall health status in each at-risk subpopulation (i.e., those who have entered the study but not yet experienced the event) at time $t$, for all $t$. 

With the above assumptions, we have the following 
identification in the case with no censoring.

\begin{lemma}[Proximal truncation-inducing identification]\label{thm:proximal_trunc_identification}
    Under Assumptions \ref{ass:proximal_indep} -  \ref{ass:completeness_truncbridge}, for any truncation-inducing bridge process $\{b(t,W_1,Z): t\geq 0\}$ satisfying \eqref{eq:trunc_bridge_def} and \eqref{eq:trunc_bridge_initial_condi}, we have 
    \begin{align}
        \E\{\dd b(t, W_1,Z) - \dd \bar N_Q(t) b(t,W_1,Z) \mid Q\leq t<T, Z,U\} = 0.  \label{eq:trunc_bridge_UZ}
    \end{align}
    Furthermore, 
    \begin{align}
        \theta = \frac{\E\{b(T,W_1,Z) \nu(T)\}}{\E\{b(T,W_1,Z)\}} \label{eq:identification}
    \end{align}
\end{lemma}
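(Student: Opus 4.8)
The plan is to prove the two displays in turn. First I would establish the ``$(Z,U)$-version'' \eqref{eq:trunc_bridge_UZ} of the bridge equation from \eqref{eq:trunc_bridge_def} using Assumptions \ref{ass:proximal_indep} and \ref{ass:completeness_truncbridge}; then I would use \eqref{eq:trunc_bridge_UZ}, together with Assumption \ref{ass:positivity} and an inverse-probability-weighting identity, to obtain \eqref{eq:identification}.

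\emph{Step 1: proving \eqref{eq:trunc_bridge_UZ}.} Write $\Xi(t)=\dd b(t,W_1,Z)-\dd\bar N_Q(t)\,b(t,W_1,Z)$ and $\zeta(t,Z,U)=\E\{\Xi(t)\mid Q\le t<T,Z,U\}$. Two observations drive the argument. First, on the at-risk event $\{Q\le t<T\}$ one has $\bar N_Q(t)=\ind(t\le Q)$ (the factor $\ind(Q<T)$ is forced to $1$), so $\Xi(t)$ there coincides with a function of $(W_1,Z,Q)$ alone. Second, since $\{Q^*\le t<T^*\}=\{Q^*\le t\}\cap\{T^*>t\}$ is the intersection of an event measurable with respect to $(W_1^*,Q^*)$ and one measurable with respect to $(W_2^*,T^*)$, conditioning the full-data independence $(W_1^*,Q^*)\bigCI(W_2^*,T^*)\mid(Z^*,U^*)$ further on this event preserves it; translating to the left-truncated population gives $(W_1,Q)\bigCI(W_2,T)\mid(Z,U,Q\le t<T)$, hence in particular $(W_1,Q)\bigCI W_2\mid(Z,U,Q\le t<T)$. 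Combining the two observations, $\E\{\Xi(t)\mid Q\le t<T,Z,U,W_2\}=\zeta(t,Z,U)$, so by the tower property and \eqref{eq:trunc_bridge_def},
\[
\E\{\zeta(t,Z,U)\mid Q\le t<T,W_2,Z\}=\E\{\Xi(t)\mid Q\le t<T,W_2,Z\}=0.
\]
Applying the completeness Assumption \ref{ass:completeness_truncbridge} to $\zeta(t,Z,U)$ then forces $\zeta(t,Z,U)=0$ a.s., which is \eqref{eq:trunc_bridge_UZ}.

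\emph{Step 2: proving \eqref{eq:identification}.} The key is that $b(T,W_1,Z)$ is a correctly calibrated truncation weight: $\E\{b(T,W_1,Z)\mid T,Z,U\}=1/G(T\mid Z,U)$ in the left-truncated population. To see this, fix $(z,u)$ and set $\rho(t\mid z,u)=\E\{b(t,W_1,Z)\mid Q\le t<T,Z=z,U=u\}$. Multiplying \eqref{eq:trunc_bridge_UZ} through by $\PP(Q\le t<T\mid z,u)$ and identifying the atom of $\dd\bar N_Q$ at $t=Q$, one finds that $\rho(\,\cdot\mid z,u)$ solves the same backward Volterra-type integral equation in $t$ as $1/G(\,\cdot\mid z,u)$ — both are governed by the at-risk ``truncation hazard'' of $Q$ given $(z,u)$ — subject to the common boundary value $\rho(t\mid z,u)=1/G(t\mid z,u)=1$ for $t\ge\tauQ$, the former coming from the initial condition \eqref{eq:trunc_bridge_initial_condi}; uniqueness of the solution then yields $\rho(t\mid z,u)=1/G(t\mid z,u)$. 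The conditional independence $W_1\bigCI T\mid(Z,U,Q\le t<T)$ from Step 1, plus right-continuity of the relevant processes, upgrades this to the value at $t=T$. Finally, Assumption \ref{ass:proximal_indep} implies $Q^*\bigCI T^*\mid(Z^*,U^*)$, so the left-truncated law of $(T,Z,U)$ has density proportional to $G(t\mid z,u)$ relative to the full-data law; since $G>0$ by Assumption \ref{ass:positivity}, for any bounded $g$,
\[
\E\{b(T,W_1,Z)\,g(T)\}=\E\{g(T)/G(T\mid Z,U)\}=\E\{g(T^*)\}\,/\,\PP(Q^*<T^*).
\]
Taking $g=\nu$ and $g\equiv 1$ and dividing gives \eqref{eq:identification}; the ratio form also makes the conclusion insensitive to which bridge process is used, since $\rho(\,\cdot\mid z,u)$, and hence both expectations, is uniquely determined.

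I expect the main obstacle to be the Volterra/uniqueness argument in Step 2: carefully manipulating the stochastic increments $\dd b$ and $\dd\bar N_Q$ under the $t$-dependent conditioning event $\{Q\le t<T\}$, locating the atom of $\dd\bar N_Q$ with the correct endpoint convention, and establishing existence and uniqueness of the solution of the resulting integral equation with the prescribed boundary value; the passage from the at-risk-averaged identity $\rho(t\mid z,u)=1/G(t\mid z,u)$ to its value at $t=T$ also requires some care with one-sided limits. Step 1 is comparatively routine once the conditional-independence transfer onto the at-risk event is set up correctly.
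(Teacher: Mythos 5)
Your proposal is correct in outline and follows the same two-step architecture as the paper's proof, with Step 1 essentially identical: both arguments reduce \eqref{eq:trunc_bridge_UZ} to Assumption \ref{ass:completeness_truncbridge} by observing that, on the at-risk event $\{Q\leq t<T\}$, Assumption \ref{ass:proximal_indep} lets one freely add or drop $W_2$ from the conditioning set of the inner expectation, after which the tower property collapses the iterated expectation to the defining equation \eqref{eq:trunc_bridge_def}. In Step 2 you target the same key intermediate identity --- that $b$ averages, given $(Z,U)$, to the inverse truncation probability $1/G(\cdot\mid Z,U)$ --- and the same closing change of measure $\E\{g(T)/G(T\mid Z,U)\}=\E\{g(T^*)\}/\PP(Q^*<T^*)$, but you establish the identity by positing that $\rho(t\mid z,u)=\E\{b(t,W_1,Z)\mid Q\leq t<T,z,u\}$ and $1/G(t\mid z,u)$ solve the same backward Volterra equation with common terminal value $1$ at $\tauQ$ and invoking uniqueness. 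The paper instead writes $\E\{b(T,W_1,Z)\mid Q,T,Z,U\}$ as an exact telescoping product integral from $T$ to $\tauQ$ of ratios of one-step conditional expectations, evaluates each ratio as $\{1-\bar\lambda_Q(t\mid Z,U)\,dt\}^{-1}$ directly from \eqref{eq:trunc_bridge_UZ}, and reads off $1/G(T\mid Z,U)$ from the product-integral representation of $G$; this sidesteps any existence/uniqueness discussion. The obstacles you flag at the end are real and are not routine verifications: because the conditioning event $\{Q\leq t<T\}$ moves with $t$, the increment of $\rho$ is \emph{not} $\E\{\dd b\mid\cdot\}$ (the extra term coming from the shrinking at-risk set is exactly what produces the reverse-time hazard), the at-risk average and the value at $t=T$ differ by a possible atom of $Q$ at $t$, and the uniqueness claim needs the kernel $\bar\lambda_Q$ to be suitably controlled. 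These deferred computations constitute the substance of the paper's Step 2, so to complete your route you would need to carry them out explicitly --- or, more economically, replace the Volterra-plus-uniqueness device by the paper's telescoping identity, which performs the same bookkeeping constructively.
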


The proof of Lemma \ref{thm:proximal_trunc_identification} is in Appendix \ref{app:proof_identification}.
We note that the identification does not require uniqueness of the truncation-inducing bridge process that satisfies \eqref{eq:trunc_bridge_def} and \eqref{eq:trunc_bridge_initial_condi}. Any process satisfying \eqref{eq:trunc_bridge_def} and \eqref{eq:trunc_bridge_initial_condi} can be used to identify $\theta$.

\begin{remark} \label{rm:IPQW}
    When the conditional independent left truncation holds with $Q^*\bigCI T^*\mid Z^*$, corresponding to the case of $U^* = \varnothing$, $W_1^* = \varnothing$, and $W_2^* = \varnothing$ in our setting, $G$ denotes the conditional CDF of $Q^*$ given $Z^*$. 
    In this case, the inverse probability of truncation weight $b(t,Z) = 1/G(t|Z)$ satisfies \eqref{eq:trunc_bridge_def} (see Appendix \ref{app:special_case} for a proof).
    With this $b(t,Z)$, the identification in Lemma \ref{thm:proximal_trunc_identification} becomes the inverse probability of truncation weighting identification \citep{vakulenko2022nonparametric, wang2024doubly}: 
    $$\theta = \frac{\E\{\nu(T)/G(T|Z)\}}{\E\{1/G(T|Z)\}}.$$
\end{remark}


\subsection{Identification with right censoring}

We consider possible loss to follow-up after study entry, and assume
 that $C^*>Q^*$ almost surely. Let $D^* = C^* - Q^*$ denote the residual censoring time.
For each observed subject, we observe $O = (Q,X,\Delta,W_1,W_2,Z)$, where $X = \min(T,C)$ and $\Delta = \ind(T<C)$.
For $D = C - Q$, let $S_D(t) = \PP(D>t)$. 
Besides this censoring scenario, another possible scenario is that censoring may happen before truncation \citep{qian2014assumptions}, which we discuss in Section \ref{sec:discussion}. Here we focus on the setting where censoring is always after truncation since this is the case for the HAAS data.

In the presence of right censoring, the distribution of $T^*$ is known to be non-identifiable after the maximum follow up time $\tauC$, so any reasonable estimand should be a functional of the distribution of $T^*$ up to $\tauC$. For examples of the marginal survival probabilities and RMST considered in Section \ref{sec:preliminary}, this means that $t_0$ should not exceed $\tauC$.

We make the following assumptions about right censoring. 
\begin{assumption}[Noninformative residual censoring]\label{ass:cen_random}
    $D \bigCI (Q,T,W_1,W_2,Z)$. 
\end{assumption}

\begin{assumption}[Positivity for censoring]\label{ass:cen_positivity}
There exists $\etaD>0$ such that $S_D(T - Q) > \etaD$ almost surely. 
\end{assumption}

Assumption \ref{ass:cen_random} {is} an independence assumption on the residual censoring time. Such assumptions are also considered in \citet{vakulenko2022nonparametric} and \citet{wang2024doubly}.
Assumption \ref{ass:cen_positivity} is a commonly considered positivity assumption for time-to-event data under right censoring, which ensures that the parameter of interest can be identified. 

We adapt the inverse probability of censoring weighting (IPCW) \citep{robins1992recovery} to the residual time scale, leading to the following identification.

\begin{lemma}[Proximal truncation-inducing identification under right censoring]\label{thm:identification_cen_after}
    Under Assumptions \ref{ass:proximal_indep} - 
    \ref{ass:cen_positivity}, for any truncation-inducing bridge process $b$ satisfying \eqref{eq:trunc_bridge_def} and \eqref{eq:trunc_bridge_initial_condi}, we have 
    \begin{align}
        \theta =  \left. \E\left\{ \frac{\Delta b(X,W_1,Z) \nu(X) }{S_D(X - Q)} \right\} \right/
        \E\left\{\frac{\Delta b(X,W_1,Z)}{ S_D(X - Q)} \right\}. \label{eq:identification_C}
    \end{align}
\end{lemma}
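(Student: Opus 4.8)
The strategy is to reduce the censored identification to the uncensored one from Lemma \ref{thm:proximal_trunc_identification} by an inverse-probability-of-censoring-weighting (IPCW) argument on the residual time scale. The key observation is that, conditionally on the full-data variables $(Q,T,W_1,W_2,Z)$, the event $\{\Delta = 1\}$ is exactly $\{D > X - Q\} = \{D > T - Q\}$, and by Assumption \ref{ass:cen_random} (noninformative residual censoring) $D$ is independent of $(Q,T,W_1,W_2,Z)$. Hence for any integrable function $g(X, W_1, Z, Q) = g(T, W_1, Z, Q)$ on the event $\{\Delta=1\}$,
\begin{align*}
    \E\left\{ \frac{\Delta\, g(X,W_1,Z,Q)}{S_D(X-Q)} \right\}
    = \E\left\{ \frac{\ind(D > T-Q)\, g(T,W_1,Z,Q)}{S_D(T-Q)} \right\}
    = \E\{ g(T,W_1,Z,Q) \},
\end{align*}
where the last step takes the conditional expectation over $D$ given the rest and uses $\E\{\ind(D>T-Q)\mid Q,T,W_1,W_2,Z\} = S_D(T-Q)$; Assumption \ref{ass:cen_positivity} guarantees the weight is well-defined and bounded, so all expectations are finite. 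Applying this identity with $g = b(T,W_1,Z)\nu(T)$ (which does not actually depend on $Q$) gives that the numerator of \eqref{eq:identification_C} equals $\E\{b(T,W_1,Z)\nu(T)\}$, and applying it with $g = b(T,W_1,Z)$ gives that the denominator equals $\E\{b(T,W_1,Z)\}$.

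Once both the numerator and denominator of the right-hand side of \eqref{eq:identification_C} have been rewritten as $\E\{b(T,W_1,Z)\nu(T)\}$ and $\E\{b(T,W_1,Z)\}$ respectively, the ratio is exactly the right-hand side of \eqref{eq:identification} in Lemma \ref{thm:proximal_trunc_identification}, which equals $\theta$ under Assumptions \ref{ass:proximal_indep}--\ref{ass:completeness_truncbridge}. Note that $b$ being a truncation-inducing bridge process satisfying \eqref{eq:trunc_bridge_def} and \eqref{eq:trunc_bridge_initial_condi} is invariant under the presence of right censoring because the bridge process is defined through the law of $(Q,T,W_1,W_2,Z)$, which is unaffected by how $D$ is superimposed; thus Lemma \ref{thm:proximal_trunc_identification} applies verbatim. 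One also has to observe that $b(T,W_1,Z)$ is integrable: $b$ is bounded by Assumption \ref{ass:trunc_bridge} and $\nu$ is bounded by assumption, so the numerator and denominator are both finite, and the denominator $\E\{b(T,W_1,Z)\}$ is positive (it equals a positive constant times $\theta$ with $\nu\equiv 1$, and more directly $b\equiv 1$ beyond $\tauQ$ forces mass there; this is already implicit in Lemma \ref{thm:proximal_trunc_identification}).

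\textbf{Main obstacle.} The computations above are routine; the only subtlety to handle carefully is the conditioning argument establishing the IPCW identity. One must verify that $S_D(T-Q)$ is a legitimate (strictly positive, finite) random denominator --- handled by Assumption \ref{ass:cen_positivity} --- and that the independence in Assumption \ref{ass:cen_random} is exactly what licenses pulling $\E\{\ind(D>T-Q)\mid Q,T,W_1,W_2,Z\} = S_D(T-Q)$ out, i.e. that $D$ is independent of the conditioning vector jointly, not merely marginally. A minor additional point is to confirm that, because we only ever evaluate $g$ on $\{\Delta = 1\}$ where $X = T$, replacing $X$ by $T$ inside $g$ and inside the weight is exact, not approximate. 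No genuinely hard step arises: the censoring case is a clean corollary of the uncensored identification via a standard IPCW change of measure on the residual scale. $\hfill\square$
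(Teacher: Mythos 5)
Your proposal is correct and follows essentially the same route as the paper's proof: reduce to the uncensored identification by conditioning on $(Q,T,W_1,W_2,Z)$, using that $\Delta=\ind(D>T-Q)$ with $X=T$ on that event, and invoking Assumption \ref{ass:cen_random} to get $\E\{\ind(D>T-Q)\mid Q,T,W_1,W_2,Z\}=S_D(T-Q)$, so the numerator and denominator become $\E\{b(T,W_1,Z)\nu(T)\}$ and $\E\{b(T,W_1,Z)\}$, which Lemma \ref{thm:proximal_trunc_identification} identifies as $\beta^{-1}\E\{\nu(T^*)\}$ and $\beta^{-1}$ respectively. The paper phrases the last step via these two $\beta^{-1}$ expressions rather than citing the ratio \eqref{eq:identification} directly, but that is a cosmetic difference.
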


The proof of Lemma \ref{thm:identification_cen_after} is in Appendix \ref{app:identification_proof_cen_after}.

\section{Proximal estimation and inference}\label{sec:estimation_asymptotics}


\subsection{Estimation}\label{sec:estimation}

The identification in Lemma \ref{thm:identification_cen_after} involves two unknown nuisance parameters: the truncation-inducing bridge process $b$, and the survival function $S_D$ for the residual censoring time.  
With a random sample $\{O_i\}_{i=1}^n$,
if $b$ and $S_D$ are known, a natural estimator for $\theta$ motivated from Lemma \ref{thm:identification_cen_after} is
\begin{align}
    \left. \left\{\sum_{i=1}^n \frac{\Delta_i b(X_i,W_{1i},Z_i)\nu(X_i)}{S_D(X_i-Q_i)} \right\} \right/ \left\{\sum_{i=1}^n \frac{\Delta_i b(X_i,W_{1i},Z_i)}{S_D(X_i-Q_i)} \right\}. \label{eq:est_oracle}
\end{align}

In practice, $b$ and $S_D$ are unknown. If we have their estimators $\hat b$ and $\hat S_D$, substituting them into \eqref{eq:est_oracle} results in an estimator for $\theta$:  
\begin{align}
    \httc
    = \left.\left\{\sum_{i=1}^n \frac{\Delta_i \hat b(X_i,W_{1i},Z_i)\nu(X_i)}{\hat S_D(X_i - Q_i)} \right\} \right/ \left\{\sum_{i=1}^n \frac{\Delta_i \hat b(X_i,W_{1i},Z_i)}{\hat S_D(X_i - Q_i)} \right\}.  \label{eq:theta_hat_c}
\end{align}

The $S_D$ can be estimated by the Kaplan-Meier estimator using data on the residual time scale, i.e., time since study entry, and we denote the estimator by $\hat S_D$. This is because $D$ is right censored by $T-Q$ in the observed data and we have $D\bigCI (T-Q)$ by Assumption \ref{ass:cen_random}. 

For estimating $b$, the integral equations in \eqref{eq:trunc_bridge_def} are known to be ill-posed \citep{ai2003efficient}, which means that small uncertainties in estimating the conditional expectations can lead to large uncertainties in the solution.
Therefore, directly solving the integral equations will lead to unstable estimates.

There have been multiple strategies developed in the proximal inference literature to overcome this challenge. 
One approach is to impose compatible parametric or semiparametric working models for the bridge functions or processes  \citep{tchetgen2024introduction, cui2024semiparametric, ying2023proximal, ying2024proximal}, which can be viewed as a form of regularization for stability \citep{tchetgen2024introduction}. 
Alternatively, nonparametric methods have recently been developed for estimating the bridge functions in proximal identification of the average treatment effect, including adversarial learning \citep{ghassami2022minimax, kallus2021causal, olivas2025source} and debiased ill-posed regression \citep{ghassami2025debiased}.

As an illustration, we consider the following semiparametric working model for $b$ with the additive form: 
\begin{align}
    b(t,W_1,Z;B(t)) = \exp\{B_0(t) + W_1 B_1(t) + Z B_z(t) \}, \label{eq:model_b}
\end{align}
where $B_0(t)$, $B_1(t)$, and $B_z(t)$ are arbitrary bounded functions of $t \in[0,\tauQ]$ with the initial condition $B_0(\tauQ) =  B_1(\tauQ) = B_z(\tauQ) = 0$;
when the variables $W_1$ and $Z$ are multi-dimensional, we treat $W_1$ and $Z$ as row vectors and take $B_1(t)$ and $B_z(t)$ as column vectors for each fixed $t$.
The condition $B_0(\tauQ) =  B_1(\tauQ) = B_z(\tauQ) = 0$ ensures that the initial condition in \eqref{eq:trunc_bridge_initial_condi} is satisfied. 
In Appendix \ref{app:model_compat}, we show that \eqref{eq:model_b} is a compatible working model by providing an example of a data generating mechanism that satisfies Assumptions \ref{ass:proximal_indep} - \ref{ass:completeness_truncbridge}, under which there exists a bridge process satisfying model \eqref{eq:model_b} and conditions \eqref{eq:trunc_bridge_def} and \eqref{eq:trunc_bridge_initial_condi}.
Note that, however, our identification and estimation framework is general and does not rely on this specific data generating mechanism.

Under model \eqref{eq:model_b}, if the event time $T$ were observed, one can estimate $b$ by solving the empirical version of \eqref{eq:trunc_bridge_equation} with a properly chosen function $\varphi(t,W_2,Z)$.
We take $\varphi(t,W_2,Z) = (1,W_{2},Z)^\top$, resulting in the following estimating equation for $b$ in the censoring-free data: 
\begin{align}
    \frac{1}{n} \sum_{i=1}^n \int_{Q_i}^{T_i} (1,W_{2i},Z_i)^\top  \{\dd b(t, W_{1i},Z_i) - \dd \bar N_{Qi}(t) b(t,W_{1i},Z_i)\} = 0. \label{eq:trunc_bridge_EE_int}
\end{align}
Let $B(t) = (B_0(t), B_1(t)^\top, B_z(t)^\top)^\top$.
By plugging in the semiparametric model \eqref{eq:model_b} into \eqref{eq:trunc_bridge_EE_int}, 
it suffice to solve the following set of estimating equations for $B(t)$ in the differential form: 
for all $t\in[0,\tauQ)$,
\begin{align}
    & \frac{1}{n} \sum_{i=1}^n \ind(Q_i\leq t<T_i) 
    \exp\{(1,W_{1i},Z_i)B({t+}) \} \nonumber \\
    &\qquad\qquad \cdot (1,W_{2i},Z_i)^\top  \{(1,W_{1i},Z_i) dB(t)  - \dd \bar N_{Qi}(t)\} = 0.  \label{eq:trunc_bridge_EE_d}
\end{align} 

Under right censoring, the estimating equations in \eqref{eq:trunc_bridge_EE_d} does not directly apply because it involves $\ind(Q\leq t < T)$, which is not always observed due to right censoring. 
To handle this missingness, we again apply IPCW. 
Specifically, let $\Delta(t) = \ind(\{\Delta = 1\}\cup\{\Delta = 0,t<X\})$.
Since $\ind(Q\leq t < T) = \ind(Q\leq t < X)$ if $T\vee t < C$ (i.e., if either $\Delta = 1$ or $t<X$), 
by incorporating time-varying IPCW weights to \eqref{eq:trunc_bridge_EE_d}, 
we have the following set of estimating equations for $B(t)$: for $t\in[0,\tau)$, 
\begin{align}
    & \sum_{i=1}^n \frac{\Delta_i(t)}{\hat S_D(X_i\wedge t - Q_i)} \cdot \ind(Q_i\leq t<X_i) 
    \exp\{(1,W_{1i},Z_i)B({t+})\} \nonumber \\
    &\qquad\qquad \cdot (1,W_{2i},Z_i)^\top \{(1,W_{1i},Z_i) dB(t) - \dd \bar N_{Qi}(t)\} =  0.   \label{eq:trunc_bridge_EE_d_cen}
\end{align}

Therefore, starting from the initial values $\hat B(\tauQ) = 0$, we can compute $\hat B(t)$ backwards in time: 
\begin{align}
    \hat B(t) = - \frac{1}{n} \int_t^{\tauQ} \M_B(t)^\dagger \ \sum_{i=1}^n  \frac{\Delta_i(t)\ind(Q_i\leq t<T_i)}{\hat S_D(X_i\wedge t - Q_i)} 
    \cdot \exp\{(1, W_{1i}, Z_i) \hat B({t+}) \} (1,W_{2i},Z_i)^\top \dd \bar N_{Qi}(t), 
\end{align}
where $\dagger$ denote the Moore-Penrose inverse (also called pseudo-inverse) of a matrix, and 
\begin{align}
    \M_B(t) =  \frac{1}{n}\sum_{i=1}^n \frac{\Delta_i(t)\ind(Q_i\leq t<T_i)}{\hat S_D(X_i\wedge t - Q_i)} \cdot \exp\{(1, W_{1i}, Z_i) \hat B({t+}) \} (1,W_{2i},Z_i)^\top(1,W_{1i},Z_i). 
\end{align}
Since the $\bar N_{Qi}(t)$'s are left continuous step functions that only jumps at $Q_i$'s, so is $\hat B(t)$.

Finally, we take $\hat b(t,w_1,z) = b(t,w_1,z; \hat B(t))$ in \eqref{eq:theta_hat_c} to obtain the estimator $\httc$.

Note that $\httc$ incorporates inverse probability of censoring weights $\Delta/\hat S_D(X-Q)$.
For certain choices of $\nu$, such as $\nu(t) = \ind(t>t_0)$ or $\nu(t) = \min(t,t_0)$, the estimator $\httc$ can be improved in terms of stability by adjusting the inverse probability of censoring weights using the minimum time at which both $\nu(T)$ and $b(T,W_1,Z)$ are observed \citep{robins1992recovery}. Details of the adjusted estimator are provided in Appendix \ref{app:theta_hat_c_adjusted}. This adjustment is implemented in both the simulation and data application below.


\subsection{Asymptotics}

The estimator $\httc$ is consistent and asymptotically normal under the following generic assumptions on $\hat b$.
We first introduce some norm notation. For a random variable $Y$ with support $\mathcal{Y}$, denote the norms $\|Y\|_1 = \E(|Y|)$ and $\|Y\|_{\sup} = \sup_{y\in\mathcal{Y}}|y|$.
We consider the following assumptions for $\hat b$.

\begin{assumption}[Consistency] \label{ass:consistency}
$\|\hat b(T,W_1,Z)- \btr(T,W_1,Z)\|_{1} = o(1)$ for some stochastic process $\btr$. 
\end{assumption}

\begin{assumption}[Asymptotic linearity] \label{ass:AL}
The estimator 
$\hat b(t,w_1,z)$ is asymptotically linear with influence function $\xi(t, w_1, z; O)$. 
In addition, denote 
\begin{align*}
    R(t,w_1,z) = \hat b(t,w_1,z) - \btr(t,w_1,z) - \frac{1}{n}\sum_{i=1}^n \xi(t, w_1, z; O_i); 
\end{align*}
suppose that $\|R(T,W_1,Z)\|_1  = o(n^{-1/2})$ and $\E\left\{\xi(T_i,W_{1i},Z_i; O_j)^2 \right\} < \infty$ for $i\neq j$.
\end{assumption}

Assumption \ref{ass:consistency} requires that $\hat b(T,W_1,Z)$ is a consistent estimator for $\btr(T,W_1,Z)$, which is a reasonable assumption when the model class used to estimate $b$ contains the truth.  
Assumption \ref{ass:AL} requires that $\hat b(T, W_1,Z)$ is an asymptotically linear estimator for $\btr(T,W_1,Z)$, which usually holds when parametric or semiparametric models are used. 
It can be shown that Assumptions \ref{ass:consistency} and \ref{ass:AL} hold for the $\hat b$ obtained in Section \ref{sec:estimation} when the semiparametric model \eqref{eq:model_b} is correctly specified. The proof follows a similar argument as that in \citet[Section C of the Supplementary Material]{ying2024proximal}, using the fact that the Kaplan-Meier estimator $\hat S_D$ is uniformly consistent and asymptotically linear. 

The estimator $\httc$ is consistent and asymptotically normal under the above generic assumptions on $\hat b$, which is formally stated in the theorem below.

\begin{theorem}\label{thm:AN_c}
    Under Assumptions \ref{ass:proximal_indep} - 
    \ref{ass:cen_positivity}, 
    suppose that $\hat b$ satisfies Assumption \ref{ass:consistency}, \\ $\|\hat b(T,W_1,Z)\|_{\sup}<\infty$, and that $\hat b(t,W_1,Z) = \hat b(\tauQ,W_1,Z)$ for all $t\geq \tauQ$. 
    Then 
    (i) $\httc\convp \thetatr$ as $n\to\infty$;
    (ii) if in addition $\hat b$ satisfies Assumption \ref{ass:AL},  $\httc$ is asymptotically linear and 
    $\sqrt{n}(\httc - \thetatr) \convd N(0, \sigmac^2)$, 
    where $\sigmac^2>0$ depend on $\xi$.
\end{theorem}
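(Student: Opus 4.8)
\textbf{Proof plan for Theorem \ref{thm:AN_c}.}

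The plan is to treat $\httc$ as a ratio of two averages and analyze numerator and denominator separately, propagating the error in $\hat b$ (and in $\hat S_D$) through each. Write $\httc = \hat\mu_\nu / \hat\mu_1$ where
\begin{align*}
    \hat\mu_\nu = \frac{1}{n}\sum_{i=1}^n \frac{\Delta_i \hat b(X_i,W_{1i},Z_i)\nu(X_i)}{\hat S_D(X_i-Q_i)}, \qquad
    \hat\mu_1 = \frac{1}{n}\sum_{i=1}^n \frac{\Delta_i \hat b(X_i,W_{1i},Z_i)}{\hat S_D(X_i-Q_i)},
\end{align*}
and let $\mu_\nu, \mu_1$ be the analogous population quantities with $\btr$ and $S_D$ in place of $\hat b$ and $\hat S_D$; by Lemma \ref{thm:identification_cen_after}, $\thetatr = \mu_\nu/\mu_1$ (note $\btr$ is itself a valid bridge process, which requires a small argument using Assumption \ref{ass:consistency} together with the defining equation \eqref{eq:trunc_bridge_def} — more on this below). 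For part (i), I would first show $\hat\mu_1 \convp \mu_1$ and $\hat\mu_\nu \convp \mu_\nu$: decompose $\hat\mu_1 - \mu_1$ into (a) the term where $\hat b$ is replaced by $\btr$ and $\hat S_D$ by $S_D$, which is an i.i.d.\ average converging by the LLN given $\|\btr\|_{\sup}<\infty$ and Assumption \ref{ass:cen_positivity}; (b) the error from replacing $\hat S_D$ by $S_D$, controlled by uniform consistency of the Kaplan–Meier estimator on $[0,\tauC]$ together with the uniform bound on $\hat b$; and (c) the error from replacing $\hat b$ by $\btr$, bounded by $\eta_D^{-1}\|\hat b(T,W_1,Z) - \btr(T,W_1,Z)\|_1 = o(1)$ via Assumption \ref{ass:consistency} (using $X\le T$ on $\{\Delta=1\}$ so that $\hat b(X,\cdot)$ reduces to $\hat b$ evaluated at the event time, and the stated extension $\hat b(t,\cdot)=\hat b(\tauQ,\cdot)$ for $t\ge\tauQ$). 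Then $\mu_1>0$ (which follows from positivity and $\E\{\btr\}=\E\{b\}>0$, itself a consequence of \eqref{eq:identification} with $\nu\equiv 1$) lets me apply the continuous mapping theorem to conclude $\httc \convp \mu_\nu/\mu_1 = \thetatr$.

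For part (ii) I would linearize. Expand $\httc - \thetatr = \hat\mu_1^{-1}\{(\hat\mu_\nu - \mu_\nu) - \thetatr(\hat\mu_1 - \mu_1)\}$ and, using $\hat\mu_1 \convp \mu_1$, reduce to finding an asymptotically linear expansion of $\hat\mu_\nu - \mu_\nu$ and $\hat\mu_1 - \mu_1$. Each of these is a functional of two estimated nuisances, so I would write
\begin{align*}
    \hat\mu_1 - \mu_1 = \underbrace{\left[\frac{1}{n}\sum_i \frac{\Delta_i \btr(X_i,W_{1i},Z_i)}{S_D(X_i-Q_i)} - \mu_1\right]}_{\text{(I): i.i.d.\ average}} + \underbrace{\left[\text{effect of } \hat S_D - S_D\right]}_{\text{(II)}} + \underbrace{\left[\text{effect of } \hat b - \btr\right]}_{\text{(III)}} + \text{remainder},
\end{align*}
and similarly for $\hat\mu_\nu$. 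Term (I) is already in influence-function form. For term (II), I would plug in the Kaplan–Meier influence-function expansion of $\hat S_D - S_D$ (a standard result, e.g.\ via the martingale representation on the residual time scale, valid on $[0,\tauC]$ by Assumption \ref{ass:cen_positivity}), which contributes a mean-zero i.i.d.\ average plus a negligible remainder; the boundedness $\|\hat b\|_{\sup}<\infty$ and $\|\btr\|_{\sup}<\infty$ keep the cross terms controlled. For term (III), I would substitute the asymptotic linearity of $\hat b$ from Assumption \ref{ass:AL}: writing $\hat b - \btr = n^{-1}\sum_j \xi(\cdot;O_j) + R$, the leading piece becomes a $V$-statistic $n^{-1}\sum_i \Delta_i S_D(X_i-Q_i)^{-1}\{n^{-1}\sum_j \xi(X_i,W_{1i},Z_i;O_j)\}$ times $\nu(X_i)$ or $1$, whose Hájek projection gives another mean-zero i.i.d.\ average (here I use $\E\{\xi(T_i,W_{1i},Z_i;O_j)^2\}<\infty$ to control the $V$-statistic degeneracy, and $\|R(T,W_1,Z)\|_1 = o(n^{-1/2})$ together with the IPCW bound to discard the remainder). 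Collecting (I)–(III) across numerator and denominator, $\sqrt n(\httc - \thetatr)$ equals $n^{-1/2}\sum_i \psi(O_i) + o_p(1)$ for an explicit influence function $\psi$ with mean zero, and the ordinary CLT gives $\sqrt n(\httc - \thetatr)\convd N(0,\sigmac^2)$ with $\sigmac^2 = \E\{\psi(O)^2\}$; positivity of $\sigmac^2$ is generic (it is zero only in degenerate cases) and I would simply assert it as stated, or note it holds as long as $\psi$ is not a.s.\ constant.

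The main obstacle is term (III) — rigorously handling the composition of the IPCW-weighted average with the asymptotically linear $\hat b$. The subtlety is that $\hat b$ is not a simple parametric estimator but the solution of the backward recursion \eqref{eq:trunc_bridge_EE_d_cen}, so its influence function $\xi$ itself incorporates the Kaplan–Meier influence function through $\hat S_D$; I need to make sure the remainder terms genuinely are $o_p(n^{-1/2})$ uniformly enough that, after integrating against $\Delta/\hat S_D$ and evaluating at the random point $(X,W_1,Z)$, they remain negligible. This requires the $\sup$-norm control $\|\hat b(T,W_1,Z)\|_{\sup}<\infty$ (so that products with the KM remainder stay $o_p(n^{-1/2})$) and careful bookkeeping that the double-sum remainder $n^{-1}\sum_i\Delta_i S_D(X_i-Q_i)^{-1}R(X_i,W_{1i},Z_i)$ is bounded by $\eta_D^{-1}\|R(T,W_1,Z)\|_1 + o_p(n^{-1/2}) = o_p(n^{-1/2})$, which uses $X_i\le T_i$ on $\{\Delta_i=1\}$ and the constancy of $\hat b$ past $\tauQ$. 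A secondary technical point worth a line is verifying that $\btr$, the probability limit of $\hat b$, actually satisfies the bridge equation \eqref{eq:trunc_bridge_def} (so that Lemma \ref{thm:identification_cen_after} applies to it): this follows by passing to the limit in the empirical estimating equation \eqref{eq:trunc_bridge_EE_d_cen} using Assumptions \ref{ass:consistency}, \ref{ass:cen_random}–\ref{ass:cen_positivity}, and uniform consistency of $\hat S_D$.
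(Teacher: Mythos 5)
Your proposal is correct and follows essentially the same route as the paper's proof: reduce to the centered ratio, decompose the numerator into an i.i.d.\ term, a Kaplan--Meier error term handled via its asymptotic linearity, and a bridge-estimation error term handled by substituting Assumption \ref{ass:AL} and projecting the resulting second-order (U/V-)statistic onto i.i.d.\ averages, then conclude by the CLT and Slutsky. The one place you go beyond the paper is in explicitly flagging that the probability limit $\btr$ must itself satisfy the bridge equations for Lemma \ref{thm:identification_cen_after} to apply to it --- the paper takes this as implicit --- and your suggested argument (passing to the limit in the empirical estimating equation) is a reasonable way to close that point.
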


The proof is in Appendix \ref{app:proof_AN_withC}. 
The asymptotic variance $\sigmac^2$ is affected by the influence function of $\hat b$, which depends on the model used to estimate $b$. 
Nevertheless, bootstrap can be applied to estimate the asymptotic variance since $\httc$ is asymptotically linear. 
We apply the random weighting bootstrap \citep[Chapter 10]{shao2012jackknife}, also known as the bootstrap clone method \citep{lo1991bayesian} or the multiplier bootstrap \citep{kosorok2008introduction}, which is closely related to the Bayesian bootstrap \citep{rubin1981bayesian}.
Specifically, for each bootstrap replication $r = 1,..., R$, we generate $e_{1,r},...,e_{n,r}$ i.i.d. from exponential distribution with rate parameter 1.  We then get the bootstrap weights $w_{i,r} = e_{i,r}/\sum_{i=1}^n e_{i,r}$ for $i = 1, ... ,n$. The weights are used in both the estimation of the nuisance parameters $b$ and $S_D$ and the final estimation of $\theta$;  we denote the obtained estimator by  $\httc_{r}^*$. The bootstrapped variance is the empirical variance of $\httc_{1}^*, ..., \httc_{R}^*$, and the Wald-type confidence interval is constructed based on the bootstrapped variance.  
Compared with the resampling-based bootstrap methods, the random weighting bootstrap avoid the issue of creating ties for time-to-even data, which is preferable.

\section{Simulation}\label{sec:simu}

We generate $Z^*,W_1^*,W_2^*, T^*$ from the following distributions   \citep{ying2024proximal}: \\
$Z^* \sim \max\{N(0.6, 0.45^2), 0\}$, $U^* \sim \max\{N(0.6, 0.45^2), 0\}$, $W_1^* \sim N(1.4 + 0.3Z^* - 0.9U^*, 0.25^2)$, $W_2^* \sim N(0.6 - 0.2Z^* + 0.5U, 0.25^2)$, $\PP(T^*>t \mid Z^*,U^*) = \exp\{ - (0.25 + 0.3Z^*+0.6U^*)t\}$. 
The left truncation time $Q^*$ is generated from 
\begin{align*}
    \PP(\tauQ - Q^* > t \mid Z^*,U^*) 
    = \left\{
    \begin{array}{ll}
        \exp\{ - (0.1 + 0.25Z^* + U^*)t\}, &  t <  \tauQ; \\
        0, &  t\geq \tauQ. \\
    \end{array}
    \right.
\end{align*}
The residual censoring time $D^*$ is generated from Weibull distribution with shape parameter 2 and scale parameter 2.
Then $C^* = Q^* + D^*$.
Subjects are observed only if $Q^*<T^*$.
The resulting truncation rate is around 47\%. The resulting censoring rate is around 37\% in the observed data.

We consider the estimand $\theta = \PP(T^*>1) = 0.4632$, corresponding to $\nu(t) = \ind(t>1)$, which is computed from a simulated full data sample of size $10^5$.
We consider the estimator $\hat\theta$ with adjusted IPCW weights (see Appendix \ref{app:theta_hat_c_adjusted} for details), denoted by ``PQB" (which stands for \textbf{P}roximal Truncation(\textbf{Q})-inducing \textbf{B}ridge estimator) in the tables.
Let $\tilde X = \min(X,t_0 \vee \tauQ)$, and $\tilde\Delta = \ind(\{\Delta = 1\} \cup \{\Delta = 0, t_0 \vee \tauQ<X\})$.
We also consider the inverse probability of truncation weighting estimator (IPQW) with the adjusted IPCW weights to handle right censoring:
\begin{align*}
    \hat\theta_{\text{IPQW}} = \left.\left\{ \sum_{i=1}^n \frac{\tilde \Delta_i \nu(\tilde X_i)}{\hat S_D(\tilde X_i-Q_i)\hat H(\tilde X_i|W_{1i},W_{2i}, Z_i)} \right\} \right/ \left\{\sum_{i=1}^n \frac{\tilde\Delta_i}{\hat S_D(\tilde X_i-Q_i)\hat H(\tilde X_i|W_{1i},W_{2i}, Z_i)} \right\},
\end{align*}
where $H$ denotes the conditional CDF of $Q^*$ given $(W_1^*, W_2^*, Z^*)$,
and $\hat H$ is obtained by fitting an Aalen model for $Q^*$ given $(W_1^*,W_2^*,Z^*)$ on the reversed time scale: 
$
H(t|w_1,w_2,z) = \exp\{\alpha_0(t) + \alpha_1(t)w_1 + \alpha_2(t)w_2 + \alpha_3(t)z\}
$. 
Specifically, we compute $\hat H$ by solving the coefficients $\alpha_0(t), \alpha_1(t), \alpha_2(t), \alpha_3(t)$ backwards in time starting with the initial condition that $\alpha_0(\tauQ)=  \alpha_1(\tauQ)= \alpha_2(\tauQ)= \alpha_3(\tauQ) = 0$, using the same procedure as for estimating $b$ in Section \ref{sec:estimation}. This approach is computationally faster than  using the \texttt{aalen()} function in the R package \texttt{timereg}. The IPQW estimator requires $Q^*\bigCI T^*\mid (W_1^*,W_2^*, Z^*)$, which is violated under this data generating mechanism due to the unmeasured dependence-inducing variable $U^*$. 

We also consider the product-limit (PL) estimator \citep{wang1991nonparametric} that random left truncation and right censoring, the Kaplan-Meier (KM) estimator that ignores left truncation, and the naive estimator that ignores both left truncation and right censoring and simply averages the $\nu(X_i)$'s.
As benchmarks, we consider two inverse probability weighting ``estimators" that incorporate $(Z,U)$ in the weights: the ``IPQW-U"  uses weights $\tilde\Delta/\{\hat S_D(X-Q)\hat G(T|Z,U)\}$ with $\hat G$ estimated by fitting an {Aalen} model for $Q^*$ given $(Z^*,U^*)$ on the reversed time scale; and the ``IPQW-o" is an oracle inverse probability of truncation weighting estimator that uses the true weights $\tilde\Delta/\{S_D(X-Q)G(X|Z,U)\}$.
Note that these two ``estimators" are not feasible in practice because they rely on the unobserved variable $U$; we include them in the simulation only for comparison purposes.

To study the impact of using time-varying IPCW weights versus using the case weights $\Delta/S_D(X-Q)$, we additionally consider the PQB, IPQW, and IPQW-U estimators that instead incorporate the IPCW case weights in both the estimation of $b$, $G$ and $H$, and the final estimation for $\theta$. These estimators are denoted ``PQB-cw", ``IPQW-cw", and ``IPQW-U-cw", respectively.

We report the bias, the empirical standard deviation (SD), the mean of the bootstrapped standard errors (bootSE), and the coverage probability (CP) of the 95\% Wald-type confidence intervals based on the bootstrapped standard errors. For all estimators, the bootstrapped standard errors are obtained using random weighting bootstrap with 200  replications. 

The results are summarized in Table \ref{tab:simu2_cens}.
As expected, the IPQW-U and the IPQW-o estimators have small bias and close to nominal coverage for sample size 1000.
Our PQB estimator also has a small bias and close to nominal coverage when the sample size is 1000, while the confidence intervals {are} slightly undercovered when {the} sample size is 500. 
We also observe that the PQB estimator shows a larger SD and bootSE compared with the other estimators, reflecting the difficulty of solving the bridge process.
The IPQW estimator shows a large bias, likely due to the unobserved dependence inducing variable $U$.
In addition, we observe that the ``PQB-cw", ``IPQW-cw", and ``IPQW-U-cw" estimators show large biases and low coverages, likely due to violation Assumption \ref{ass:cen_positivity} (positivity for censoring), as the support of $T^*$ is $[0,\infty)$.
Finally we see that the PL estimator shows a large bias and  low coverage, and the KM estimator significantly overestimates the survival probability. 
The naive estimator shows a positive bias, whose magnitude is smaller than that of the KM estimator. This is because the bias directions for left truncation and right censoring are opposite:  ignoring left truncation tends to overestimate the time‑to‑event, and ignoring right censoring tends to underestimate it.

\begin{table}[ht]
\centering
\caption{Simulation results for different estimators under the case with right censoring. Each observed data set has sample size 500 or 1000, and 500 data sets are simulated for each sample size.
}
\label{tab:simu2_cens}
\begin{tabular}{lrcccrccc}
  \toprule
  & \multicolumn{4}{c}{$n = 500$} & \multicolumn{4}{c}{$n = 1000$} \\
  \cmidrule(lr){2-5} \cmidrule(lr){6-9}
  Method & Bias & SD & bootSE & CP & Bias & SD & bootSE & CP \\
  \midrule
  PQB & -0.0124 & 0.0745 & 0.0605 & 0.926 & -0.0062 & 0.0395 & 0.0393 & 0.944 \\
  IPQW & 0.0168 & 0.0426 & 0.0391 & 0.906 & 0.0172 & 0.0284 & 0.0278 & 0.882 \\
  PQB-cw & -0.0209 & 0.0858 & 0.0710 & 0.928 & -0.0565 & 0.0623 & 0.0558 & 0.872 \\
  IPQW-cw & 0.0116 & 0.0479 & 0.0435 & 0.906 & -0.0325 & 0.0379 & 0.0355 & 0.832 \\
  PL & 0.0710 & 0.0312 & 0.0293 & 0.324 & 0.0707 & 0.0208 & 0.0207 & 0.072 \\ 
  KM & 0.2633 & 0.0222 & 0.0203 & 0.000 & 0.2631 & 0.0145 & 0.0144 & 0.000 \\  
  naive & 0.1889 & 0.0222 & 0.0212 & 0.000 & 0.1891 & 0.0157 & 0.0150 & 0.000 \\
  \midrule
  IPQW-U & -0.0022 & 0.0479 & 0.0426 & 0.922 & -0.0024 & 0.0326 & 0.0307 & 0.940 \\
  IPQW-U-cw & -0.0065 & 0.0539 & 0.0471 & 0.920 & -0.0466 & 0.0419 & 0.0380 & 0.762 \\
  IPQW-o & 0.0025 & 0.0339 & 0.0304 & 0.932 & 0.0001 & 0.0226 & 0.0218 & 0.938 \\
  \bottomrule
\end{tabular}
\end{table}

\section{Application}\label{sec:application}

We analyze data collected from the Honolulu Heart Program (HHP, 1965-1990) and the subsequent Honolulu Asia Aging Study (HAAS, 1991-2012), which followed a cohort of men born between 1990 and 1919 and with Japanese ancestry \citep{p2012honolulu, zhang2024marginal, rava2023doubly}. 
We are interested in estimating the cognitive impairment-free survival on the age scale, also known as disease-free survival (DFS) in the time-to-event literature. The data contains subjects who were alive and did not have cognitive impairment when they entered HAAS. Since subjects who ha{d} already died or developed cognitive impairment before HAAS study entry {we}re not included, age to moderate cognitive impairment or death were left truncated by their age at HAAS study entry.  

We applied {the} conditional Kendall's tau test  \citep{tsai1990testing, martin2005testing} to detect potential violation of quasi-independence between age at HAAS study entry and age at moderate cognitive impairment or death; the p-value from the test was 0.0036, providing strong evidence against the quasi-independence assumption between the two ages. 
As mentioned before, latent factors such as socioeconomic status, health-seeking behavior, and overall health status are likely associated with both the time-to-event and the age at HAAS entry, therefore inducing dependence between the two.

We consider the covariates measured during HHP: education ($\leq 12$ years or otherwise), {\it APOE} genotype (present or absence of an {\it APOE E4} risk allele), systolic blood pressure (mmHg), heart rate (beats per 30 seconds),  grip strength (kilograms, at HHP baseline), two measurements of midlife alcohol consumption (heavy/non-heavy, at Exam 1 and Exam 3 during HHP), 
midlife cigarette consumption (yes/no, at Exam 3 during HHP). 
After removing subjects with missing covariates, the data contains 1930 subjects who were alive and  not cognitively impaired when they entered HAAS.
Among the 1930 subjects, 351 (18.2\%) were right censored due to loss to follow-up or end of  study. 
The covariate distributions in this data set is summarized in Table \ref{tab:table1} in Appendix \ref{app:HAAS}.
Based on the discussion in Section \ref{sec:preliminary} (with illustration in Figure \ref{fig:DAG_HAAS}),
we classify the covariates into three groups: $W_1$ includes grip strength;
$W_2$ includes education, alcohol consumption, and cigarette consumption;
and $Z$ includes {\it APOE} genotype, systolic blood pressure, and heart rate.

The minimum age at HAAS study entry was 71.3 years old. We therefore focus on estimating the conditional DFS conditional upon surviving to 71.3 years old as the left tail {of} the event time {distribution} is not identifiable before this age \citep{tsai1987note, wang1989semiparametric, wang1991nonparametric}.
As illustration we estimate the conditional DFS probabilities at ages 80, 85, 90, and 95, that is, $\nu(t) = \ind(t>t_0)$ for $t_0 = 80, 85, 90, 95$, respectively.

We consider the PQB, IPQW, PL, KM, and naive estimators described in Section \ref{sec:simu}. 
Figure \ref{fig:HAAS_survplot} shows the different estimates and their 95\% confidence intervals based on {the} bootstrapped standard errors. 

We observe notable differences in estimated DFS for men with Japanese ancestry relative to methods that ignore unmeasured dependence between age at study entry and time-to-event, with PQB yielding substantially lower estimates, especially at earlier ages (around 80–90).
These findings indicate the presence of unmeasured latent factors inducing the dependence between age at study entry and DFS, and suggest that commonly used methods may underestimate the risk of cognitive decline and mortality at earlier ages due to inadequate adjustment for this dependence. 
By leveraging grip strength, education, and midlife lifestyle variables as proxies for 
the latent socioeconomic status, health-seeking behavior, and overall health status, the PQB estimator helps account for such unmeasured dependence. The resulting lower DFS estimates from the PQB estimator highlight the risk of cognitive decline and mortality at earlier ages that may not be captured by the commonly used methods such as IPQW and PL. This has potential implications for the timing of preventive interventions for cognitive decline and mortality risk, as well as the planning and allocation of public health resources related to brain aging and mortality.


We also observe that the PQB estimator has wider confidence intervals compared to the IPQW estimator, coincides with the observation in simulation.
The PL estimates are higher than the PQB and the IPQW estimates, especially at later ages, likely due to violation of the quasi-independence assumption for left truncation. 
The KM and naive estimates, which ignore left truncation, are substantially higher than the PQB, IPQW, and PL estimates, reflecting the selection bias due to left truncation. 
Consequently, conclusions drawn from these approaches will substantially overestimate the cognitive impairment-free survival in the HAAS cohort.
The naive estimates, which ignore both left truncation and right censoring, are lower than the KM estimates, which ignore only left truncation. This reflects the additional bias from ignoring right censoring, which is in the opposite direction to the bias induced by left truncation. 

\begin{figure}[ht]
\centering
 \includegraphics[width=0.75\textwidth]{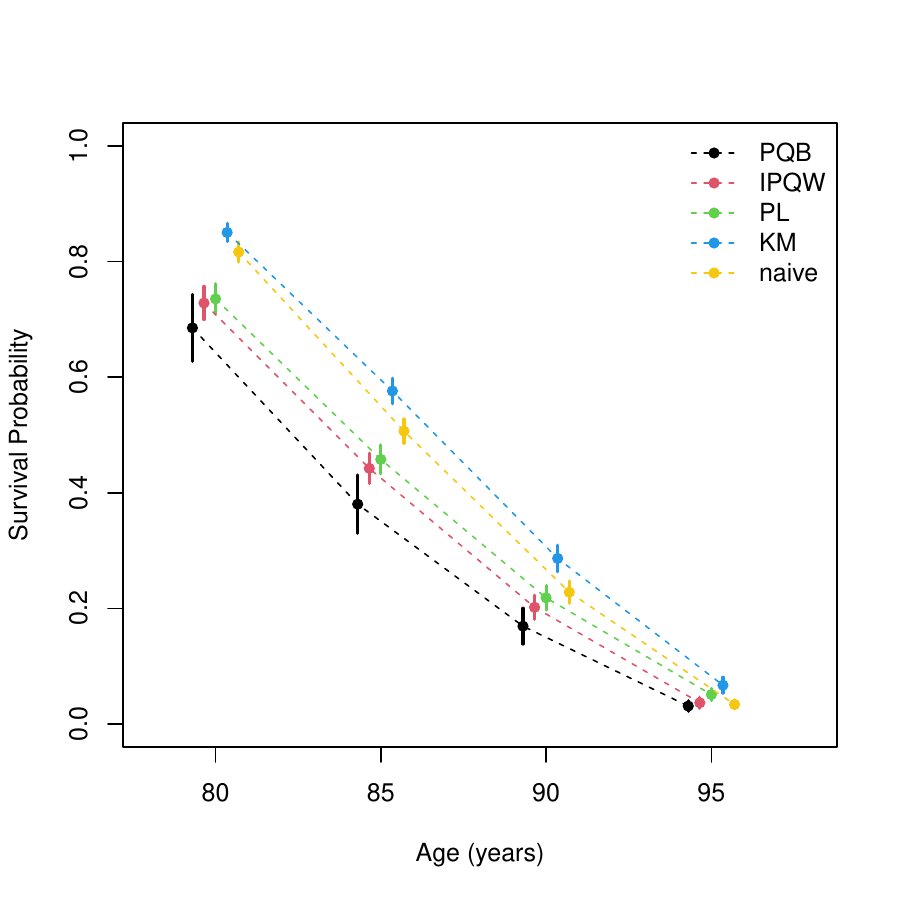}
\caption{Estimated DFS at ages 80, 85, 90, 95 from different estimators.}
\label{fig:HAAS_survplot}
\end{figure}

\section{Discussion}\label{sec:discussion}



We have proposed a proximal weighting  identification and estimation framework for handling dependent left truncation, which admits measured covariates may only serve as imperfect proxies for explaining the dependence between the left truncation time and the event time. 
To our best knowledge, this is the first work that leverages proxies to handle dependent left truncation.
The importance of the proposed proximal framework is underscored by our findings in the HAAS data. Existing methods, such as IPQW and PL, yield survival estimates that appear overly optimistic, likely because they fail to account for the unmeasured dependence between age at study entry and cognitive impairment-free survival. By leveraging grip strength, education, and midlife lifestyle variables as proxies, our approach provides a more plausible assessment of cognitive impairment-free survival, suggesting that the 
selection
bias due to delayed entry in aging studies may be driven by latent factors that existing approaches cannot eliminate.

The proposed identification relies on the key assumption of proximal independence, which in practice needs to be justified by domain knowledge and is untestable due to the involvement of unmeasured latent factors.  
Developing sensitivity analysis for potential violations of this assumption is an interesting direction for future research.

The proposed estimator requires estimating the truncation-inducing bridge process involved in the identification. We have illustrated the estimation of the bridge process using a semiparametric working model with an additive form, under which a closed-form solution exists. 
As mentioned earlier, there have been recent developments in nonparametric estimation of bridge functions in proximal causal inference settings, including adversarial learning \citep{ghassami2022minimax, kallus2021causal, olivas2025source} and debiased ill-posed regression \citep{ghassami2025debiased}. Extending these approaches to estimate the bridge processes for time-to-event settings is another promising future direction.

 As discussed in \citet{qian2014assumptions}, there are two censoring scenarios in the presence of left truncation: 1) censoring may happen before left truncation, and 2) censoring always after left truncation. We developed an approach for the second censoring scenario because it is the case for our application. Similar approaches can be developed for the first censoring scenario, by considering censoring on the original time scale.
 Our handling of right censoring  assumes independent
 residual censoring. Under the weaker conditional {independent} censoring assumption given the observed covariates, IPCW can be similarly applied conditional on the covariates. In addition, if there is dependence between the  censoring time and the event time of interest after conditioning on the measured covariates, the proximal identification framework proposed in \citet{ying2024proximal} can be incorporated, again by considering censoring on the residual time scale. We leave this direction for future work.

The code for implementation of the proposed estimators and simulations in this paper is available at \url{https://github.com/wangyuyao98/truncProxy_weighting}.




\section*{Acknowledgments}

This work has used the computing services provided by the Open Science Grid (OSG) Consortium, 
which is supported by the National Science Foundation awards \#2030508 and \#1836650.
AI tools (chatGPT) were used to assist in converting part of the code for implementation from R to C++, which has significantly reduced the computation time of the proposed estimator. 
AI tools were also used to assist with editing, including correcting grammatical errors and improving the clarity and flow of the text.


\bibliographystyle{biom}
\bibliography{bib/left_trunc, bib/proxy, bib/cox, bib/osg}

\appendix

\newpage
\section{Preliminaries}

Here we introduce the {\it reverse time hazard function} of $Q^*$ and its associated properties, which will be used in the later proofs. 

Recall that $G(t|z,u)$ denotes the conditional cumulative distribution function (CDF) of $Q^*$ given $(Z^*,U^*)$.
Let $\bar\lambda_Q$ denote the conditional {\it reverse time hazard function} of $Q^*$ given $(Z^*,U^*)$: 
\begin{align}
\bar\lambda_Q(q|z,u) &= \lim_{h\to 0+} \frac{\PP(q-h< Q^*\leq q|Q^*\leq q, Z^* = z, U^* = u)}{h}= \frac{dG(q|z,u)}{G(q|z,u)}. \label{eq:lambda_bar} 
\end{align}
 Let $p(q|z,u)$ denote the conditional density function of $Q$ given $(Z,U)$ in the observed data.
Following the proof in \citet{wang2024doubly} (with $(Z^*,U^*)$ being the covariates in their paper), we have 
\begin{align}
    G(q|z,u) & = \exp\left\{ - \int_q^\tauQ \bar\lambda_Q(t|z,u) dt \right\}, \label{eq:G_lambda} \\
    \bar\lambda_Q(q|z,u) & = \frac{p(q|z,u)}{\PP(Q\leq q<T\mid Z=z,U=u)}. \label{eq:lambda_p}
\end{align}
Let $\prodi$ denote the integral product. Then \eqref{eq:G_lambda} can be expressed as
\begin{align}
    G(q|z,u) = \Prodi_{q}^{\tauQ} \{1-\bar\lambda_Q(t|z,u) dt\}.  \label{eq:G_lambda_prodi} 
\end{align}
In addition, \eqref{eq:lambda_p} implies
\begin{align}
    \PP(q-dq < Q \leq q \mid Q\leq q<T, Z=z,U=u) 
    & = \frac{p(q|z,u)dq}{\PP(Q\leq q<T \mid  Z=z,U=u)} \\
    & = \bar\lambda_Q(q|z,u) dq. \label{eq:lambda_d_p}
\end{align}

In the following, we will use ``L.H.S" as a short hand for ``left hand side" and ``R.H.S" as a short hand for ``right hand side". 
Let $\bar N_Q^*(t) = \ind(t\leq Q^*)$.

\newpage
\section{Additional discussion and plausibility of Assumption \ref{ass:trunc_bridge}}\label{app:plausibility}

Equation \eqref{eq:trunc_bridge_def} in Assumption \ref{ass:trunc_bridge} resembles the equation that defines the censoring-inducing bridge process in \citet{ying2024proximal} for handling dependent right censoring,
which is expressed in terms of increments of the censoring-inducing bridge process and the censoring counting process, with the expectation taken conditional on being ``at risk" for each time $t$, which is $\{T\geq t, C\geq t\}$ in right censored data. 
In comparison, our equation \eqref{eq:trunc_bridge_def} involves the backwards counting process for $Q$, and the expectation is conditional on $Q\leq t<T$, the event of being ``at-risk" at time $t$ for left truncated data. 

As mentioned in the main paper, we take the existence of the truncation-inducing bridge process as a primitive assumption.
Nevertheless, to assure the reader, we provide below a data generating mechanism under which Assumption \ref{ass:trunc_bridge} holds. 

Suppose $(Z^*,U^*)$ follow a multivariate normal distribution, and suppose that $(W_1^*,W_2^*)$ are generated from: 
\begin{align}
    W_1^* & = \gamma_{11} U^* + \gamma_{12} Z^* + \epsilon_1 , 
    \qquad W_2^* = \gamma_{21} U^* + \gamma_{22} Z^* + \epsilon_2,
\end{align}
where $\epsilon_1\sim N(0,\sigma_1^2)$, $\epsilon_2\sim N(0,\sigma_2^2)$ are independent random variables that are also independent of $(Z^*,U^*)$. 
Let $\lambda_T$ denotes the conditional hazard function of $T^*$ given $(Z^*,U^*)$, and recall $\bar\lambda_Q$ from \eqref{eq:lambda_bar} the conditional reverse time hazard function of $Q^*$ given $(Z^*,U^*)$.
$T^*$ and $Q^*$ are generated from the following additive hazard model: 
\begin{align}
    \lambda_{T}(t|Z^*,U^*) & = \beta_0(t) + \beta_z(t) Z^* + \beta_u(t) U^*, \\ 
    \bar\lambda_{Q}(t|Z^*,U^*) & = \alpha_0(t) + \alpha_z(t) Z^* + \alpha_u(t) U^*. \label{eq:DGM_lambda_Q}
\end{align}
We show that under this data generating mechanism, there exists a stochastic process $b$ satisfying the conditions \eqref{eq:trunc_bridge_def} and \eqref{eq:trunc_bridge_initial_condi}.

Consider
\begin{align}
    b(t,W_1,Z;B(t)) = \exp\{B_0(t) + W_1 B_1(t) + Z B_z(t) \}, \label{eq:model_b_supp}
\end{align}
where $B_0(t)$, $B_1(t)$, and $B_z(t)$ are arbitrary bounded functions of $t \in[0,\tauQ]$ with the initial condition $B_0(\tauQ) =  B_1(\tauQ) = B_z(\tauQ) = 0$. Under this initial condition, \eqref{eq:trunc_bridge_initial_condi} holds. We now show that there exists $B_0(t)$, $B_1(t)$, $B_z(t)$ such that \eqref{eq:trunc_bridge_def} holds.

By the tower property of expectations, the left hand side (L.H.S.) of \eqref{eq:trunc_bridge_def}: 
\begin{align*}
    &\quad \E\{\dd b(t, W_1,Z) - \dd \bar N_Q(t) b(t,W_1,Z) \mid Q\leq t<T, W_2,Z\} \\
    & = \quad \E\left[\E\left\{\dd b(t, W_1,Z) - \dd \bar N_Q(t) b(t,W_1,Z) \mid  Q\leq t<T, Z,U, W_1, W_2 \right\} \mid Q\leq t<T, W_2, Z\right] \\
    & = \quad \E\left[ \dd b(t, W_1,Z) - \E\left\{\dd \bar N_Q(t)  \mid  Q\leq t<T, Z,U, W_1, W_2 \right\} b(t,W_1,Z) \mid Q\leq t<T, W_2, Z\right]. 
\end{align*}
Recall $\bar N_Q^*(t) = \ind(Q^*\geq t)$. 
Since $(Q,T,W_1,W_2,Z,U)$ has the same distribution as \\
$(Q^*,T^*,W_1^*,W_2^*,Z^*,U^*)\mid Q^*<T^*$, we have 
\begin{align}
    \E\left\{\dd \bar N_Q(t) \mid Q\leq t<T, Z,U, W_1, W_2 \right\}
    & = \E\left\{\dd \bar N_Q^*(t) \mid Q^*\leq t<T^*, Z^*,U^*, W_1^*, W_2^* \right\} \nonumber \\
    & = \E\left\{\dd \bar N_Q^*(t) \mid Q^*\leq t<T^*, Z^*,U^*\right\} \label{eq:compat_proof_1} \\
    & = \E\left\{\dd \bar N_Q(t) \mid Q\leq t<T, Z,U\right\} \nonumber\\
    & = \PP(t-dt<Q\leq t\mid Q\leq t<T,Z,U) \label{eq:compat_proof_5}  \\
    & = -\bar\lambda_Q(t|Z,U)dt, \nonumber
\end{align}
where \eqref{eq:compat_proof_1} holds because $Q^*\bigCI (W_1^*,W_2^*) \mid Z^*,U^*$ under this data generating mechanism, and \eqref{eq:compat_proof_5} holds by \eqref{eq:lambda_d_p}.
Therefore, $b$ satisfies \eqref{eq:trunc_bridge_def} in Assumption \ref{ass:trunc_bridge} if and only if 
\begin{align}
    \E\left[ \dd b(t,W_1,Z) + \bar\lambda_Q(t|Z,U) b(t,W_1,Z)  \mid Q\leq t<T, W_2, Z\right] = 0. \label{eq:compat_proof_2}
\end{align}
By plugging \eqref{eq:DGM_lambda_Q} and \eqref{eq:model_b_supp} into \eqref{eq:compat_proof_2}, we have 
\begin{align}
    &\E\big[ \{dB_0(t) + dB_1(t) W_1 + dB_z(t) Z\} \exp\{B_0(t) + B_1(t) W_1 + B_z(t) Z\}  \\
    &\quad +  \{\alpha_0(t) + \alpha_z(t) Z + \alpha_u(t) U\} \exp\{B_0(t) + B_1(t) W_1 + B_z(t) Z\}  \mid Q\leq t<T, W_2, Z \big] =0.
    \label{eq:compat_proof_3}
\end{align}
Denote 
$$\Ec_\ell(t,w_2,z,u) = \E\left[ W_1^\ell\exp\{B_1(t) W_1\} \mid Q\leq t<T, W_2 = w_2, Z = z, U = u\right],$$ 
for $\ell = 0,1$. 
Then by tower property of expectations and some algebra,  
\begin{align}
    \text{L.H.S. of \eqref{eq:compat_proof_3}} 
    & = \A \cdot \exp\{B_0(t) + B_z(t) Z\}, \label{eq:compat_proof_4}
\end{align}
where 
\begin{align*}
    \A 
    & = \{dB_0(t) + dB_z(t) Z + \alpha_0(t) + \alpha_z(t) Z + \alpha_u(t) U\} \cdot \Ec_0(t,W_2,Z,U) \\
    &\quad + dB_1(t) \cdot \Ec_1(t,W_2,Z,U).
\end{align*}
We now compute $\Ec_\ell(t,W_2,Z,U)$ for $\ell = 0,1$.
Since the observed data $(Q,T,W_1,W_2,Z,U)$ has the same distribution as $(Q^*,T^*,W_1^*,W_2^*,Z^*,U^*)\mid Q^*<T^*$, we have
\begin{align*}
     \Ec_\ell(t,w_2,z,u)
     & = \E\left\{\left. (W_1^*)^\ell \ e^{B_1(t) W_1^*} \right| Q^*\leq t<T^*, W_2^* = w_2, Z^* = z, U^* = u \right\} \\
     & = \E\left\{\left. (W_1^*)^\ell  \ e^{B_1(t) W_1^*} \right| Z^* = z, U^* = u \right\},
\end{align*}
where the last equality holds because $W_1^*\bigCI (W_2^*,Q^*,T^*)\mid (Z^*,U^*)$ under this data generating mechanism. 
Since the conditional distribution $W_1^* \mid (Z^*,U^*)$ is normal with mean $\gamma_{11} U^* + \gamma_{12} Z^*$  and variance $\sigma_1^2$, by properties of the moment generating function for normal distributions, we have 
\begin{align}
    \Ec_0(t,w_2,z,u)
    & = \E\left\{\left. e^{B_1(t) W_1^*} \right| Z^* = z, U^* = u \right\}\\
    & = \exp\left\{ (\gamma_{11} u + \gamma_{12} z)B_1(t) + \frac{1}{2}\sigma_1^2 B_1(t)^2\right\}, \\
    \Ec_1(t,w_2,z,u)
    & = \E\left\{\left. W_1^*  \ e^{B_1(t) W_1^*} \right| Z^* = z, U^* = u \right\} \\
    & = \{\gamma_{11}u + \gamma_{12}z + \sigma_1^2 B_1(t)\} \exp\left\{ (\gamma_{11}u + \gamma_{12}z)B_1(t) + \frac{1}{2}\sigma_1^2 B_1(t)^2\right\}. 
\end{align}
Plugging the above into \eqref{eq:compat_proof_3} \eqref{eq:compat_proof_4} and by comparing the coefficients for the terms involving $U^*$ and $Z^*$ and the constant terms, 
it suffices to find $B_0(t),B_z(t),B_1(t)$ satisfying 
\begin{align}
    \left\{ 
    \begin{array}{l}
        \gamma_{11} dB_1(t) + \alpha_u(t) = 0, \\
        dB_z(t) + \gamma_{12} dB_1(t) + \alpha_z(t) = 0, \\
        dB_0(t) + \alpha_0(t) + \sigma_1^2 B_1(t) dB_1(t) = 0, \\
    \end{array}
    \right.
\end{align}
in order for \eqref{eq:trunc_bridge_equation} to hold. 
Solving the above equations gives
\begin{align}
    \left\{ 
    \begin{array}{l}
        dB_1(t) = - \gamma_{11}^{-1} \alpha_u(t), \\
        dB_z(t) = - \alpha_z(t) - \gamma_{12} dB_1(t), \\
        dB_0(t) = - \alpha_0(t) - \sigma_1^2 B_1(t) dB_1(t).
    \end{array}
    \right. \label{eq:model_compat_5}
\end{align}
Therefore, if we take 
\begin{align*}
    \left\{ 
    \begin{array}{l}
        B_1(t) = \gamma_{11}^{-1} \int_t^{\tauQ} \alpha_u(s) ds, \\
        B_z(t) = \int_t^{\tauQ}  \alpha_z(s) ds + \gamma_{12}\int_t^\tau dB_1(s), \\
        B_0(t) = \int_t^{\tauQ} \alpha_0(s)ds + \sigma_1^2 \int_t^{\tauQ}  B_1(s) dB_1(s),
    \end{array}
    \right. 
\end{align*}
then the bridge process in \eqref{eq:model_b_supp} satisfies the conditions in \eqref{eq:trunc_bridge_def} and \eqref{eq:trunc_bridge_initial_condi}.

\newpage
\section{Identification proofs}

\subsection{Proof of Lemma \ref{thm:proximal_trunc_identification}}\label{app:proof_identification}

\begin{proof}[Proof of Lemma \ref{thm:proximal_trunc_identification}]
    1) We first show \eqref{eq:trunc_bridge_UZ}. 
    Let $\xi(t,Z,U)$ denote the left hand side of \eqref{eq:trunc_bridge_UZ}, i.e., 
    $$\xi(t,Z,U) = \E\{\dd b(t, W_1,Z) - \dd \bar N_Q(t) b(t,W_1,Z) \mid Q\leq t<T, Z, U\}.$$ 
    By Assumption \ref{ass:completeness_truncbridge}, it suffices to show that 
    \begin{align*}
        \E\left[\xi(t,Z,U) \mid Q\leq t<T, W_2,Z\right] = 0.
    \end{align*}
    Recall that $\bar N_Q^*(t) = \ind(t\leq Q^*)$. Since $(Q,T,W_1,W_2,Z,U)$ has the same distribution as \\
    $(Q^*,T^*,W_1^*,W_2^*,Z^*,U^*)\mid Q^*<T^*$, we have 
    \begin{align}
        &\quad \E\left[\xi(t,Z,U) \mid Q\leq t<T, W_2,Z\right] \\
        & = \E\left[ \E\{\dd b(t, W_1,Z) - \dd \bar N_Q(t) b(t,W_1,Z) \mid Q\leq t<T,Z,U\}  \mid Q\leq t<T, W_2,Z \right] \\
        & = \E\left[ \E\{\dd b(t, W_1^*,Z^*) - \dd \bar N_Q^*(t) b(t,W_1^*,Z^*) \mid Q^*\leq t<T^*,Z^*,U^*\}  \mid Q^*\leq t<T^*, W_2^*,Z^* \right] \\
        & = \E\left[ \E\{\dd b(t, W_1^*,Z^*) - \dd \bar N_Q^*(t) b(t,W_1^*,Z^*) \mid Q^*\leq t<T^*, W_2^*,Z^*,U^*\}  \mid Q^*\leq t<T^*, W_2^*,Z^* \right] \label{eq:truncProxy_proof3}\\
        & = \E\left[ \E\{\dd b(t, W_1,Z) - \dd \bar N_Q(t) b(t,W_1,Z) \mid Q\leq t<T, W_2,Z,U\}  \mid Q\leq t<T, W_2,Z \right] \\
        & = \E\{\dd b(t, W_1,Z) - \dd \bar N_Q(t) b(t,W_1,Z) \mid Q\leq t<T, W_2, Z\} \\
        & = 0, \label{eq:lem1_proof_1}
    \end{align}
    where \eqref{eq:truncProxy_proof3} holds by Assumption \ref{ass:proximal_indep} and \eqref{eq:lem1_proof_1} holds by \eqref{eq:trunc_bridge_def}.

    2) We now show \eqref{eq:identification}. 
    By the towel property of expectations, we have 
    \begin{align}
        \E\{b(T,W_1,Z) \nu(T)\} 
        & = \E[ \E\{b(T,W_1,Z)\mid Q,T,Z,U\} \nu(T)]. \label{eq:truncProxy_proof5}
    \end{align}
    In the following, we compute the inner conditional expectation in the R.H.S. of \eqref{eq:truncProxy_proof5}. 
    Recall that $\prodi$ denotes the integral product. 
    Since in the observed data $Q<T$ 
    a.s., by \eqref{eq:trunc_bridge_initial_condi} and the property of integral products, we have 
    \begin{align}
        \E\{b(T,W_1,Z)\mid Q,T,Z,U\}
        & = \E\{b(T,W_1,Z)\mid Q<T,Q,T,Z,U\} \\
        & = \Prodi_{T}^{\tauQ} \frac{\E\{b(s,W_1,Z)\mid Q\leq s-ds,T,Z,U\}}{\E\{b(s+ds,W_1,Z)\mid Q\leq s,T,Z,U\}}. \label{eq:truncProxy_proof1}
    \end{align}
    We now compute the conditional expectations in the R.H.S. of \eqref{eq:truncProxy_proof1}. 
    Since $(Q,T,W_1,W_2,Z,U)$ has the same distribution as $(Q^*,T^*,W_1^*,W_2^*,Z^*,U^*)\mid Q^*<T^*$, we have 
    \begin{align}
        &\quad \E\{b(s,W_1,Z)\mid Q\leq s-ds,T,Z,U\} \\
        & = \E\{b(s,W_1^*,Z^*)\mid Q^*<T^*, Q^*\leq s-ds, T^*,Z^*,U^*\}  \\
        & = \E\{b(s,W_1^*,Z^*)\mid Q^*<T^*, Q^*\leq s-ds, T^*>s,Z^*,U^*\} \label{eq:truncProxy_proof4} \\
        & = \E\{b(s,W_1,Z)\mid Q\leq s-ds,T>s,Z,U\}.   
    \end{align}
    where \eqref{eq:truncProxy_proof4} holds because $(W_1^*, Q^*) \bigCI T^* \mid (Z^*,U^*)$, which is implied from Assumption \ref{ass:proximal_indep}. 
    Likewise, 
    \begin{align}
        \E\{b(s+ds,W_1,Z)\mid Q\leq s,T,Z,U\} = \E\{b(s+ds,W_1,Z)\mid Q\leq s,T>s,Z,U\}.
    \end{align}
    These together with \eqref{eq:truncProxy_proof1} imply
    \begin{align}
        \E\{b(T,W_1,Z)\mid Q,T,Z,U\}
        & = \Prodi_{T}^{\tauQ} \frac{\E\{b(s,W_1,Z)\mid Q\leq s-ds, T>s, Z,U\}}{\E\{b(s+ds,W_1,Z)\mid Q\leq s, T>s, Z,U\}}. \label{eq:truncProxy_proof2}
    \end{align}
    We now compute the fraction in the R.H.S. of \eqref{eq:truncProxy_proof2} using \eqref{eq:trunc_bridge_UZ}. 
    Recall that $\bar N_Q(t) = \ind(t\leq Q<T)$, so $\bar N_Q(t+dt) = 0$ for $t\geq Q$. 
    Note that $db(t,W_1,Z) = b(t+dt,W_1,Z) - b(t,W_1,Z)$ and $d\bar N_Q(t) = \bar N_Q(t+dt) - \bar N_Q(t)$. 
    Therefore, 
    \begin{align}
        &\quad \text{L.H.S. of \eqref{eq:trunc_bridge_UZ}} \\
        & = \E\big[\{b(t+dt,W_1,Z) - b(t,W_1,Z)\} - \{\bar N_Q(t+dt) - \bar N_Q(t)\} b(t,W_1,Z;\theta) \mid Q\leq t<T, Z,U \big] \\
        & = \E\left[b(t+dt,W_1,Z) - \{1-\bar N_Q(t)\} b(t,W_1,Z) \mid Q\leq t<T, Z,U \right] \\
        & = \E\{b(t+dt, W_1,Z)\mid Q\leq t<T, Z,U\} 
        - \E\{ \ind(Q \leq t-dt) b(t, W_1,Z)\mid Q\leq t<T, Z,U\} \\
        & = \E\{b(t+dt, W_1,Z)\mid Q\leq t, T>t, Z,U\} \\
        &\quad - \E\{b(t, W_1,Z)\mid Q\leq t-dt, T>t, Z,U\} \cdot \PP(Q\leq t-dt \mid Q\leq t<T, Z,U). 
    \end{align}
    So \eqref{eq:trunc_bridge_UZ} implies
    \begin{align}
        &\quad \E\{b(t+dt, W_1,Z)\mid Q\leq t, T>t, Z,U\} \\
        & =  \E\{b(t, W_1,Z)\mid Q\leq t-dt, T>t, Z,U\} \cdot \PP(Q\leq t-dt \mid Q\leq t<T, Z,U). \label{eq:lem1_proof_2}
    \end{align}
    Therefore, 
    \begin{align}
        \frac{\E\{b(t, W_1,Z)\mid Q\leq t-dt, T>t, Z,U\}}{\E\{b(t+dt, W_1,Z)\mid Q\leq t, T>t, Z,U\}} & = \frac{1}{\PP(Q\leq t-dt \mid Q\leq t<T, Z,U)}   \\
        & = \frac{1}{1-\PP(Q> t-dt \mid Q\leq t<T, Z,U)} \\ 
        & = \frac{1}{1-\PP(t-dt < Q \leq t \mid Q\leq t<T, Z,U)} \\
        & = \frac{1}{1-\bar\lambda_Q(t|Z,U) dt}, \label{eq:lem1_proof_3}
    \end{align}
    where \eqref{eq:lem1_proof_3} holds by \eqref{eq:lambda_d_p}. 
    This combined with \eqref{eq:truncProxy_proof1} implies
    \begin{align}
        \E\{b(T,W_1,Z) \nu(T)\} 
        & = \E\left[ \frac{\nu(T)}{\prodi_{T}^{\tauQ} \{1-\bar\lambda_Q(t|Z,U) dt\}} \right] 
        = \E\left\{\frac{\nu(T)}{G(T|Z,U)}\right\}, \label{eq:lem1_proof_4}
    \end{align}
    where the last equation holds by \eqref{eq:G_lambda_prodi}.  
    Following the proof of Lemma 1 in \citet[Supplementary Material]{wang2024doubly}, we have 
    \begin{align*}
        \E\left\{\frac{\nu(T)}{G(T|Z,U)}\right\} = \beta^{-1} \E\{\nu(T^*)\}, 
    \end{align*}
    where $\beta = \PP(Q^*<T^*)$.
    This together with \eqref{eq:lem1_proof_4} imply
    \begin{align*}
        \E\{b(T,W_1,Z) \nu(T)\} = \beta^{-1} \E\{\nu(T^*)\}. 
    \end{align*}
    As a special case, when $\nu(t) = 1$ for all $t$, we have
    \begin{align*}
        \E\{b(T,W_1,Z)\}  
        = \beta^{-1}.
    \end{align*}
    Therefore, 
    \begin{align*}
        \frac{\E\{b(T,W_1,Z) \nu(T)\}}{\E\{b(T,W_1,Z)\}} 
        = \E\{\nu(T^*)\}
         = \theta. 
    \end{align*}
    
\end{proof}

\subsection{Inverse probability of truncation weighting as a special case}\label{app:special_case}

We now prove the claim in Remark \ref{rm:IPQW}. 
Consider the case with $U = \varnothing, W_1 = \varnothing, W_2 = \varnothing$ so that $Q^*\bigCI T^*\mid Z^*$. We will verify that $b(t,Z) = 1/G(t|Z)$ satisfies \eqref{eq:trunc_bridge_def}, where $G(t|z)$ in this case denotes the conditional cumulative distribution function of $Q^*$ given $Z^* = z$. In other words, we will show that 
\begin{align*}
    \E\{\dd b(t,Z) - \dd \bar N_Q(t) b(t,Z) \mid Q\leq t<T, Z\} = 0.
\end{align*}

Recall the definition of $\bar\lambda_Q$ in \eqref{eq:lambda_bar}. In the following, we will use the notation $\bar\lambda_Q$ in the context of $U^* = \varnothing$. 
We have 
\begin{align*}
    db(t,Z) = - \frac{dG(t|Z)}{G(t|Z)^2} = -\frac{\bar\lambda_Q(t|Z) dt}{G(t|Z)}. 
\end{align*}
On the other hand,
\begin{align*}
    \E\left\{\left. \dd \bar N_Q(t) \right| Q\leq t<T, Z \right\}
    & = - \E\left\{\left. \ind(t-dt < Q \leq t) \right| Q\leq t<T, Z \right\} \\
    & = - \PP(t-dt < Q \leq t \mid  Q\leq t<T, Z ) \\
    & = - \bar\lambda_Q(t|Z) dt,
\end{align*}
where the last equality holds by \eqref{eq:lambda_d_p}.
Combining the above, we have 
\begin{align*}
    \E\{\dd b(t,Z) - \dd \bar N_Q(t) b(t,Z) \mid Q\leq t<T, Z\}  
    & = \dd b(t,Z) - \E\{\dd \bar N_Q(t) \mid Q\leq t<T, Z\} b(t,Z)
    = 0.
\end{align*}

\subsection{Proof of Lemma \ref{thm:identification_cen_after}}\label{app:identification_proof_cen_after}

\begin{proof}[Proof of Lemma \ref{thm:identification_cen_after}]
    
    When $\Delta = 1$, we have $X = T$; and recall that $C = Q + D$.
    So
    \begin{align}
        \E\left\{\frac{\Delta b(X,W_1,Z)\nu(X)}{S_D(X-Q)} \right\}
        & = \E\left\{\frac{\ind(T<C) b(T,W_1,Z)\nu(T)}{S_D(T-Q)} \right\} \nonumber \\
        & = \E\left[ \E\left\{ \left. \frac{\ind(T<C) b(T,W_1,Z)\nu(T)}{S_D(T-Q)} \right| Q,T,W_1,Z \right\} \right] \nonumber \\
        & = \E\left[ \frac{b(T,W_1,Z)\nu(T)}{S_D(T-Q)}  \cdot \E\left\{ \ind(T<C) \left. \right| Q,T,W_1,Z \right\} \right] \nonumber \\
        & = \E\left[ \frac{b(T,W_1,Z)\nu(T)}{S_D(T-Q)}  \cdot \E\left\{ \ind(D > T-Q) \left. \right| Q,T,W_1,Z \right\} \right].
    \end{align}
    By Assumption \ref{ass:cen_random},
    \begin{align*}
        \E\left\{ \ind(D > T-Q) \left. \right| Q,T,W_1,Z \right\} = S_D(T-Q), 
    \end{align*}
    so 
    \begin{align*}
        \E\left\{\frac{\Delta b(X,W_1,Z)\nu(X)}{S_D(X-Q)} \right\}
        = \E\left\{ b(T,W_1,Z)\nu(T) \right\}= \beta^{-1}\E\{\nu(T^*)\}, 
    \end{align*}
    where the last equality is shown in the proof of Lemma \ref{thm:proximal_trunc_identification}. 
    
    As a special case, take $\nu(t) \equiv 1$ for all $t>0$, we have 
    \begin{align*}
        \E\left\{\frac{\Delta b(X,W_1,Z)}{S_D(X-Q)} \right\} 
        = \beta^{-1}.
    \end{align*}
    Combining the above, the conclusion follows.
    
\end{proof}

\newpage
\section{Model compatibility}\label{app:model_compat}

We confirm the compatibility of the data generating mechanism described in Section \ref{sec:simu}, model \eqref{eq:model_b}, and  Assumptions \ref{ass:proximal_indep} - \ref{ass:completeness_truncbridge}. 
That is, under the data generating mechanism in Section \ref{sec:simu}, 
Assumptions \ref{ass:proximal_indep} - \ref{ass:completeness_truncbridge} are satisfied, and there exists a truncation bridge process satisfying model \eqref{eq:model_b} as well as  conditions \eqref{eq:trunc_bridge_def} and \eqref{eq:trunc_bridge_initial_condi}.
Note that the data generating mechanism described in Section \ref{sec:simu} is a special case for the data generating mechanism in Section \ref{app:plausibility}, so we prove under the latter.

It is straight forward to see that Assumptions \ref{ass:proximal_indep} and \ref{ass:positivity} are satisfied. 
In addition, we have shown in Appendix \ref{app:plausibility} that Assumption \ref{ass:trunc_bridge} is satisfied, and there exists a truncation bridge process satisfying model \eqref{eq:model_b} as well as the conditions \eqref{eq:trunc_bridge_def} and \eqref{eq:trunc_bridge_initial_condi}. 
Assumption \ref{ass:completeness_truncbridge} holds with a similar argument as in \citet{ying2024proximal}.

\newpage
\section{Estimation and asymptotics}

\subsection{Estimator with adjusted IPCW}\label{app:theta_hat_c_adjusted}

The estimator $\httc$ in \eqref{eq:theta_hat_c} only include uncensored subjects, i.e., those with $\Delta = 1$; while for some choices of $\nu$, more subjects can be incorporated. 
For instance, when $\nu(t) = \ind(t>t_0)$ or $\nu(t) = \min(t,t_0)$, the value of $\nu(T)$ is known for both uncensored subjects and subjects that are censored after $t_0$. In addition, recall from \eqref{eq:trunc_bridge_initial_condi} that $b(t,W_1,Z) \equiv 1$ for all $t\geq \tauQ$. 
Therefore, we can also include censored subjects with $X>t_0 \vee \tauQ$, by adjusting the IPCW weights using the minimum time at which both $\nu(T)$ and $b(T,W_1,Z)$ are observed \citep{robins1992recovery}. 
Specifically, let $\tilde X = \min(X,t_0 \vee \tauQ)$, and $\tilde\Delta = \ind(\{\Delta = 1\} \cup \{\Delta = 0, t_0 \vee \tauQ<X\})$. 
The identification in Lemma \ref{thm:identification_cen_after} can be adjusted to
    \begin{align}
        \theta =  \left. \E\left\{ \frac{\tilde\Delta b(\tilde X,W_1,Z) \nu(\tilde X) }{S_D(\tilde X - Q)} \right\} \right/
        \E\left\{\frac{\tilde\Delta b(\tilde X,W_1,Z)}{ S_D(\tilde X - Q)} \right\},  \label{eq:identification_cen_2}
    \end{align}
    and the corresponding estimator for $\theta$ is: 
    \begin{align}
    \hat\theta_{\text{adj}} 
    = \left.\left\{\sum_{i=1}^n \frac{\tilde\Delta_i \hat b(\tilde X_i,W_{1i},Z_i)\nu(\tilde X_i)}{\hat S_D(\tilde X_i - Q_i)} \right\} \right/ \left\{\sum_{i=1}^n \frac{\tilde\Delta_i \hat b(\tilde X_i,W_{1i},Z_i)}{\hat S_D(\tilde X_i - Q_i)} \right\}.  \label{eq:theta_tilde_c_adjusted}
\end{align}
    Compared with $\httc$ in \eqref{eq:theta_hat_c}, the consistency and asymptotic normality of $\hat\theta_{\text{adj}}$ require a weaker positivity condition: $S_D(\tilde X - Q) > \etaD$ almost surely for some $\etaD>0$.
    Even when the stronger positivity condition (Assumption \ref{ass:cen_positivity}) holds, $\hat\theta_{\text{adj}}$ tends to be more stable than $\httc$, since the IPCW weights $\tilde\Delta_i/\hat S_D(\tilde X_i-Q_i)$ in $\hat\theta_{\text{adj}}$ have smaller variability 
    than the IPCW weights $\Delta_i/\hat S_D(X_i-Q_i)$ in \eqref{eq:theta_hat_c}.
    
    Note that the identification in \eqref{eq:identification_cen_2} and the corresponding estimator for $\theta$ view $\tauQ$ (the supremum for the support of $Q^*$) as known, which is the case when $\tauQ$ can be determined by domain knowledge or study design. When $\tauQ$ is unknown, it can be approximated by $\max_i Q_i$, which is what we used in the simulation and application below.

\subsection{Review of the asymptotic results for $U$-statistics}

The following asymptotic results of $U$-statistics from \citet{van2000asymptotic} are used in the proof of Theorem \ref{thm:AN_c} for the asymptotic normality of $\hat\theta$. For completeness, we review the results below. 

Let $X_1,...,X_n$ be a random sample from an unknown distribution and $r$ an integer between 2 and $n$. Given a known function $h(x_1,...,x_r)$, consider the estimation of the parameter $\psi = \E\{h(X_1,...,X_r)\}$. A $U$-statistic with kernel $h$ is defined as
\begin{align*}
U = \frac{1}{{n \choose r}} \sum_{\alpha} h(X_{\alpha_1},...,X_{\alpha_r}),
\end{align*}
where the sum is taken over the set of all unordered subsets $\alpha = (\alpha_1,...,\alpha_r)$ of $r$ different integers chosen from $\{1,...,n\}$. 
The projection of $U-\psi$ onto the set of all statistics of the form $\sum_{i=1}^n g_i(X_i)$ is given by
\begin{align}
\hat U = \sum_{i=1}^n \E(U-\psi|X_i) = \frac{r}{n}\sum_{i=1}^n h_1(X_i),
\end{align}
where 
$h_1(x) = \E\{h(x,X_2,...,X_r)\} - \psi$. 

\begin{proposition}[Theorem 12.3 in \citet{van2000asymptotic}]\label{lem:Ustats}
	If $\E\{h^2(X_1,...,X_r)\}<\infty$, then 
    \[
    n^{1/2} (U- \psi - \hat U)\convp 0.
    \]
    Consequently, the sequence $n^{1/2}(U-\psi)$ is asymptotically normal with mean zero and covariance $r^2\xi_1$, where, with $X_1,...,X_r, X_1',...,X_r'$ denoting i.i.d. random variables, 
    \[
    \xi_1 = \text{cov}(h(X_1,X_2,...,X_r), h(X_1,X_2',...,X_r')).
    \]
\end{proposition}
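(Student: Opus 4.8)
The plan is to recognize $\hat U$ as the H\'ajek projection of $U - \psi$ onto the linear subspace of $L^2$ spanned by statistics of the form $\sum_{i=1}^n g_i(X_i)$, and to exploit the orthogonality of this projection. First I would confirm that $\hat U = \sum_{i=1}^n \E(U - \psi \mid X_i)$ is exactly this projection, which requires computing $\E(U - \psi \mid X_i)$. By symmetry of the kernel and independence of the $X_i$, conditioning on $X_i$ leaves only the $\binom{n-1}{r-1}$ subsets that contain index $i$ contributing a nonconstant term, giving $\E(U-\psi\mid X_i) = \tfrac{r}{n}h_1(X_i)$ and hence the stated form $\hat U = \tfrac{r}{n}\sum_{i=1}^n h_1(X_i)$. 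Because $\E(U - \psi) = 0$ and the $X_i$ are independent, $\hat U$ is the orthogonal projection, so $\E\{(U-\psi-\hat U)\hat U\} = 0$. This yields the key reduction
\[
\var(U - \psi - \hat U) = \var(U) - \var(\hat U),
\]
so it suffices to show $n\{\var(U) - \var(\hat U)\} \to 0$.

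Next I would compute the two variances. Since $h_1$ has mean zero and $\hat U$ is a scaled sum of i.i.d.\ terms, $\var(\hat U) = r^2\xi_1/n$ exactly, with $\xi_1 = \var(h_1(X_1))$. For $\var(U)$ I would invoke Hoeffding's variance decomposition,
\[
\var(U) = \binom{n}{r}^{-1}\sum_{c=1}^{r}\binom{r}{c}\binom{n-r}{r-c}\xi_c,
\]
where $\xi_c$ is the variance of the $c$-th projection of $h$. Expanding the $c=1$ coefficient for fixed $r$ as $n\to\infty$ shows it equals $r^2\xi_1/n + O(n^{-2})$, while every term with $c\ge 2$ is $O(n^{-c}) = O(n^{-2})$; the finiteness of all $\xi_c$ follows from $\E h^2 < \infty$ by Jensen's inequality. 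Consequently $\var(U) - \var(\hat U) = O(n^{-2})$, so $n\,\E\{(U-\psi-\hat U)^2\} = n\{\var(U)-\var(\hat U)\}\to 0$. Thus $n^{1/2}(U-\psi-\hat U)\to 0$ in $L^2$, hence in probability, which is the first claim.

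The asymptotic normality then follows by a projection-plus-Slutsky argument. Since $n^{1/2}\hat U = (r/n^{1/2})\sum_{i=1}^n h_1(X_i)$ is a normalized sum of i.i.d.\ mean-zero random variables with finite variance $\xi_1$, the classical central limit theorem gives $n^{1/2}\hat U \convd N(0, r^2\xi_1)$. Combining this with $n^{1/2}(U-\psi-\hat U)\convp 0$ via Slutsky's theorem yields $n^{1/2}(U-\psi)\convd N(0, r^2\xi_1)$. Finally I would identify $\xi_1$ with the stated covariance by conditioning on the shared coordinate: writing $\E\{h\mid X_1\} = \psi + h_1(X_1)$ and using independence of the remaining arguments across the two copies gives $\xi_1 = \var(\E\{h\mid X_1\}) = \text{cov}(h(X_1,X_2,\ldots,X_r), h(X_1,X_2',\ldots,X_r'))$.

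The main obstacle is the variance bookkeeping in the middle step: one must verify that the leading ($c=1$) term of Hoeffding's decomposition cancels $\var(\hat U)$ to first order in $n^{-1}$ and that all higher-order contributions are genuinely $O(n^{-2})$, which is where the combinatorial estimates on the binomial coefficients enter. The projection identity, the CLT, and the Slutsky step are otherwise routine once the orthogonality relation and the i.i.d.\ structure of $\hat U$ are in place.
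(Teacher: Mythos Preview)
The paper does not provide its own proof of this proposition: it is stated purely as a citation of Theorem~12.3 in \citet{van2000asymptotic}, reviewed ``for completeness'' and then invoked in the proof of Theorem~\ref{thm:AN_c}. Your proposal is a correct and complete sketch of the standard H\'ajek-projection argument that van der Vaart himself uses, so there is nothing to compare against and nothing to correct.
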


\newpage
\subsection{Proof of Theorem \ref{thm:AN_c}}\label{app:proof_AN_withC}

\begin{proof}[Proof of Theorem \ref{thm:AN_c}]
   
When $\Delta = 1$, we have $X = T$. So
\begin{align}
    \httc - \theta
    = \left.\left\{\sum_{i=1}^n \frac{\Delta_i \hat b(T_i,W_{1i},Z_i)\{\nu(T_i) - \theta\}}{\hat S_D(T_i - Q_i)} \right\} \right/ \left\{\sum_{i=1}^n \frac{\Delta_i \hat b(T_i,W_{1i},Z_i)}{\hat S_D(T_i- Q_i)} \right\}.  \label{eq:ANc_proof_err_decomp}
\end{align}

(i) We first prove consistency.
We will show that (a) the numerator of \eqref{eq:ANc_proof_err_decomp} converges to zero in probability, and (b) the denominator of \eqref{eq:ANc_proof_err_decomp} converges to $\beta^{-1}$ in probability. Then (a) and (b) imply $\hat\theta - \thetatr \convp 0$.

To show (a), we consider the following decomposition for the numerator of \eqref{eq:ANc_proof_err_decomp}:
\begin{align}
    \frac{1}{n} \sum_{i=1}^n \frac{\Delta_i \hat b(T_i,W_{1i},Z_i)\{\nu(T_i) - \theta\}}{\hat S_D(T_i - Q_i)}
    & = \B_1 + \B_2 + \B_3, \label{eq:thm2_proof_err_decomp}
\end{align}
where 
\begin{align*}
    \B_1 & = \frac{1}{n} \sum_{i=1}^n \frac{\Delta_i}{S_{D}(T_i - Q_i)} \cdot \btr(T_i,W_{1i},Z_i)\{\nu(T_i) - \thetatr\}, \\
    \B_2 & = \frac{1}{n} \sum_{i=1}^n \frac{\Delta_i}{ S_{D}(T_i - Q_i)} \left\{\hat b(T_i,W_{1i},Z_i) - \btr(T_i,W_{1i},Z_i) \right\} \{\nu(T_i) - \thetatr\}, \\
    \B_3 & = \frac{1}{n} \sum_{i=1}^n \left\{\frac{\Delta_i}{\hat S_{D}(T_i - Q_i)} - \frac{\Delta_i}{S_{D}(T_i - Q_i)} \right\}  \hat b(T_i,W_{1i},Z_i)\{\nu(T_i) - \thetatr\};
\end{align*}

In the following, we will show that $\B_1$, $\B_2$, and $\B_3$ all converge to zero in probability.
We first consider $\B_1$. Note that $\B_1$ is an average of $n$ i.i.d. quantities and 
\begin{align*}
    &\quad \E\left[ \frac{\Delta}{S_{D}(T - Q)} \cdot \btr(T,W_{1},Z)\{\nu(T) - \thetatr\} \right] \\
    & =  \E\left(\E\left[\left.  \frac{\ind(T < C) }{S_{D}(T - Q)} \cdot \btr(T,W_{1},Z)\{\nu(T) - \thetatr\} \right| Q,T,W_1,Z \right] \right) \\
    & =  \E\left(  \frac{ \E\left[\left.\ind(T < C) \right| Q,T,W_1,Z \right]}{S_{D}(T - Q)} \cdot \btr(T,W_{1},Z)\{\nu(T) - \thetatr\}\right)
\end{align*}
By Assumption \ref{ass:cen_random}, 
\begin{align*}
    \E\left[\left.\ind(T < C) \right| Q,T,W_1,Z \right]
    & = \E\left[\left.\ind(D > T-Q) \right| Q,T,W_1,Z \right] = S_{D}(T - Q). 
\end{align*}
So 
\begin{align*}
    \E\left[ \frac{\Delta}{S_{D}(T - Q)} \cdot \btr(T,W_{1},Z)\{\nu(T) - \thetatr\}\right] 
    & = \E\left[\btr(T,W_{1},Z)\{\nu(T) - \thetatr\}\right]  \\
    & = \E\left[\btr(T,W_{1},Z)\nu(T) \right] - \thetatr \cdot \E\left[\btr(T,W_{1},Z)\right] \\
    & = 0,
\end{align*}
where the last equation is by  Lemma \ref{thm:proximal_trunc_identification}. 
Therefore, by law of large numbers, $\B_1 \convp 0$.


For $\B_2$, we have
    \begin{align}
        \E\left(|\B_2|\right)
        &\leq \E\left[ \left|\frac{\Delta}{S_{D}(T - Q)}\right| \cdot \left|\hat b(T,W_{1},Z) - \btr(T,W_{1},Z) \right| \cdot |\nu(T) - \thetatr| \right] \nonumber \\
        & \lesssim \E\left[ \left|\hat b(T,W_{1},Z) - \btr(T,W_{1},Z) \right| \right] \label{eq:AN_proof_1}\\
        & = \left\|\hat b(T,W_{1},Z) - \btr(T,W_{1},Z) \right\|_1 \\
        & = o(1) \label{eq:AN_proof_2},
    \end{align}
    where \eqref{eq:AN_proof_1} holds by Assumption \ref{ass:cen_positivity} and the boundedness of $\nu$, and \eqref{eq:AN_proof_2} holds by Assumption \ref{ass:consistency}.
    Therefore, by Markov's inequality, $\B_2 = o_p(1)$, i.e, $\B_2 \convp 0$. 

We now consider $\B_3$. 
The Kaplan-Meier estimator $\hat S_D$ is known to be uniformly consistent, i.e, $\sup_{t} |\hat S_D(t) - S_{D}(t)| = o_p(1)$.
By the boundedness of $\hat b$ and $\nu$ and  Assumption \ref{ass:cen_positivity}, 
\begin{align*}
    |\B_3| 
    &\lesssim \frac{1}{n} \sum_{i=1}^n \left| \frac{\Delta_i}{\hat S_{D}(T_i - Q_i)} - \frac{\Delta_i}{S_{D}(T_i - Q_i)} \right| \\
    & = \frac{1}{n} \sum_{i=1}^n \frac{\Delta_i \left|\hat S_{D}(T_i -Q_i) - S_{D}(T_i-Q_i) \right|}{\hat S_{D}(T_i-Q_i)S_{D}(T_i-Q_i)} \\
    & \lesssim \sup_{t} |\hat S_D(t) - S_{D}(t)| + o_p(1) \\
    & = o_p(1),
\end{align*}
that is, $\B_3 \convp 0$.  

Combining the above, (a) follows. With a similar argument as above and the result that $\E\{\btr(T,W_{1},Z)\} = \beta^{-1}$ (shown in the proof of Lemma \ref{thm:proximal_trunc_identification}), 
(b) follows.
Therefore, $\httc - \thetatr \convp 0$.

\vspace{1em}
(ii) We now show the asymptotic normality of $\httc$ with the additional Assumption \ref{ass:AL}. 
In the consistency proof, we have shown that the denominator of \eqref{eq:ANc_proof_err_decomp} converges to $\beta^{-1}$ in probability.
By Slutsky's Theorem, it suffices to show that the numerator of \eqref{eq:ANc_proof_err_decomp} is asymptotically normal. 

Again we consider the decomposition \eqref{eq:thm2_proof_err_decomp} for the numerator of \eqref{eq:ANc_proof_err_decomp}. 
Note that $\B_1$ is an average of i.i.d. terms. We will show that $\B_2$ and $\B_3$ are both asymptotically linear. Then the numerator of \eqref{eq:ANc_proof_err_decomp} is asymptotically normal by central limit theorem.

1) We first show that $\B_2$ adopts an asymptotically linear decomposition, that is, there exists a function $h_2(O)$ with $\E\{h_2(O)\} = 0$ and $\E\{h_2(O)^2\} < \infty$ such that $ \B_2 = n^{-1}\sum_{i=1}^n h_2(O_i) + o_p(n^{-1/2})$. 
By Assumption \ref{ass:AL}, 
    \begin{align*}
        \B_2 = \B_{21} + \B_{22},
    \end{align*}
    where
    \begin{align*}
        \B_{21} & = \frac{1}{n^2}\sum_{i=1}^n \sum_{j=1}^n \frac{\Delta_i}{ S_{D}(T_i - Q_i)} \cdot \xi(T_i,W_{1i},Z_i; O_j) \{\nu(T_i) - \thetatr\}, \\
        \B_{22} & = \frac{1}{n}\sum_{i=1}^n  \frac{\Delta_i}{ S_{D}(T_i - Q_i)} \cdot R(T_i,W_{1i},Z_i) \{\nu(T_i) - \thetatr\}.
    \end{align*}
    Since $\nu$ is a bounded function and $|\Delta/S_D(t-Q)| \leq \eta_D^{-1}$ almost surely by Assumption \ref{ass:cen_positivity}, 
    \begin{align*}
        \E\{|\B_{22}|\}
        &\lesssim \E\{|R(T,W_1,Z)|\} 
        = \|R(T,W_1,Z)\|_1 = o(n^{-1/2}). 
    \end{align*}
    By Markov inequality, $\B_{22} = o_p(n^{-1/2})$. 

    Next we show that $\B_{21} = U/2 + o_p(n^{-1/2})$, where $U$ is a $U$-statistic defined in \eqref{eq:thm1_proof_Ustats} below. 
    Then Proposition \ref{lem:Ustats} (the asymptotic results for $U$-statistics) implies that $\B_{21}$ is asymptotically linear. 
    Let
    \begin{align*}
        h(O_i,O_j) 
        & = \frac{\Delta_i}{ S_{D}(T_i - Q_i)}\cdot \xi(T_i,W_{1i},Z_i; O_j) \{\nu(T_i) - \thetatr\} \\
        &\quad + \frac{\Delta_j}{ S_{D}(T_j - Q_j)}\cdot \xi(T_j,W_{1j},Z_j; O_i) \{\nu(T_j) - \thetatr\}.
    \end{align*}
    Then
    \begin{align*}
        \B_{21} 
        & = \frac{1}{n^2} \sum_{i<j} h(O_i,O_j) +  \frac{1}{n^2} \sum_{i=1}^n h(O_i,O_i).
    \end{align*}
    Consider the $U$ statistic
    \begin{align}
        U = \frac{1}{{n\choose 2}} \sum_{i<j} h(O_i,O_j),  \label{eq:thm1_proof_Ustats}
    \end{align}
    where ${n\choose 2} = n(n-1)/2$ is the combinatorial number of $n$ choose 2. 
    We have 
    \begin{align}
        \B_{21} 
        & = \left\{ \frac{1}{2}\cdot\frac{1}{{n\choose 2}} - \frac{1}{n^2(n-1)} \right\} \sum_{i<j} h(O_i,O_j) + \frac{1}{n^2} \sum_{i=1}^n h(O_i,O_i) \nonumber  \\
        & = \frac{U}{2} - \frac{1}{n^2(n-1)} \sum_{i<j} h(O_i,O_j) + \frac{1}{n^2} \sum_{i=1}^n h(O_i,O_i) 
    \end{align}
    In the following we show that the last two terms in the above equation are both $o_p(n^{-1/2})$.
    We have 
    \begin{align*}
        & \E\left\{\left|\frac{1}{n^2(n-1)} \sum_{i<j} h(O_i,O_j)\right|\right\}
        \leq \frac{1}{n^2(n-1)}\sum_{i<j} \E\left\{\left|h(O_i,O_j)\right|\right\} = O(n^{-1}). \\
        & \E\left\{\left|\frac{1}{n^2} \sum_{i=1}^n h(O_i,O_i)\right|\right\}
        \leq \frac{1}{n^2}\sum_{i=1}^n \E\left\{\left|h(O_i,O_i)\right|\right\} = O(n^{-1}).
    \end{align*}
    By Markov inequality,  
    \begin{align*}
        & \frac{1}{n^2(n-1)} \sum_{i<j} h(O_i,O_j) 
        = O_p(n^{-1}) = o_p(n^{-1/2}), \\
        & \frac{1}{n^2} \sum_{i=1}^n h(O_i,O_i)
        = O_p(n^{-1}) = o_p(n^{-1/2}).
    \end{align*}
    Therefore, 
    \begin{align}
        \B_{21} 
        & = \frac{U}{2} + o_p(n^{-1/2}).  \label{eq:AN_proof_4}
    \end{align}
    
    Before applying Proposition \ref{lem:Ustats} to show $U$ is asymptotically linear,
    we first compute $\E\{h(O_1,O_2)\}$, which corresponds to the $\psi$ involved in Proposition \ref{lem:Ustats}. Recall from Assumption \ref{ass:AL} that $\xi$ is the influence function of $\hat b$.
    By Assumption \ref{ass:cen_random}, 
    \begin{align*}
        \E\left\{\left. \frac{\Delta_1}{S_{D}(T_1 - Q_1)} \right| T_1, Q_1, W_{11}, Z_1, O_2\right\}
         = \E\left\{\left. \frac{\ind(D_1>T_1-Q_1)}{S_{D}(T_1 - Q_1)} \right| T_1, Q_1 \right\} = 1.
    \end{align*}
    Therefore, by the tower property of expectations, 
    \begin{align}
        \E\{h(O_1,O_2)\} 
        & = \E\left[ \frac{\Delta_1}{ S_{D}(T_1 - Q_1)}\cdot \xi(T_1,W_{11},Z_1; O_2) \{\nu(T_1) - \thetatr\}\right] \\
        &\quad + \E\left[ \frac{\Delta_2}{ S_{D}(T_2 - Q_2)}\cdot \xi(T_2,W_{12},Z_2; O_1) \{\nu(T_2) - \thetatr\}\right] \nonumber \\
        & = \E\left( \E\left[\left. \frac{\Delta_1}{ S_{D}(T_1 - Q_1)}\right| T_1,Q_1,W_{11},Z_1,O_2\right] \xi(T_1,W_{11},Z_1; O_2) \{\nu(T_1) - \thetatr\}  \right)\\
        &\quad + \E\left( \E\left[\left. \frac{\Delta_2}{ S_{D}(T_2 - Q_2)} \right| T_2,Q_2,W_{12},Z_2,O_1\right] \xi(T_2,W_{12},Z_2; O_1) \{\nu(T_2) - \thetatr\}\right) \nonumber \\
        & = \E\left[\xi(T_1,W_{11},Z_1; O_2) \{\nu(T_1) - \thetatr\}\right]
        + \E\left[\xi(T_2,W_{12},Z_2; O_1) \{\nu(T_2) - \thetatr\}\right] \nonumber \\
        & = \E\left[ \E\left\{\xi(T_1,W_{11},Z_1; O_2)\mid O_1\right\} \{\nu(T_1) - \thetatr\}\right] \nonumber \\
        &\quad 
        + \E\left[\E\left\{\xi(T_2,W_{12},Z_2; O_1)\mid O_2\right\} \{\nu(T_2) - \thetatr\}\right] \nonumber \\
        & = 0, \label{eq:AN_proof_5}
    \end{align}
    where the last equation holds by the definition of influence function and the independence between $O_1$ and $O_2$.
    Therefore, Proposition \ref{lem:Ustats} implies 
    \begin{align*}
        U = \frac{2}{n} \sum_{i=1}^n h_2(O_i) + o_p(n^{-1/2}),
    \end{align*}
    where $h_2(o) = \E\{h(o,O)\}$ and $o = (q,t,w_1,w_2,z)$.
    This together with \eqref{eq:AN_proof_4} implies 
    \begin{align*}
        \B_{21} & = \frac{1}{n} \sum_{i=1}^n h_2(O_i) + o_p(n^{-1/2}).
    \end{align*}
    Combining the above, 
    \begin{align}
        \B_{2} & = \B_{21} + \B_{22} 
        = \frac{1}{n} \sum_{i=1}^n h_2(O_i) + o_p(n^{-1/2}), \label{eq:AN_proof_B2}
    \end{align}
    
    Lastly, we verify that $\E\{h_2(O)\} = 0$ and $\E\{h_2(O)^2\}<\infty$.
    Recall that $h_2(o) = \E\{h(o,O)\}$. 
    Since $O_1$ and $O_2$ are independent and each has the same distribution as $O$, by definition, 
    \begin{align*}
        h_2(O_1) = \E\{h(O_1,O_2) \mid O_1\}. 
    \end{align*}
    Therefore, 
    \begin{align*}
        \E\{h_2(O)\} 
        = \E\{h_2(O_1)\}
        = \E[\E\{h(O_1,O_2)\mid O_1\}]
        = \E\{h(O_1,O_2)\} 
        =0,
    \end{align*}
    where the last equation holds by \eqref{eq:AN_proof_5}. 
    Next we show that $\E\{h_2(O)^2\}<\infty$. 
    Note that 
    \begin{align*}
        h_2(O_1) 
        & = \E\{h(O_1,O_2)\mid O_1\} \\
        & = \E\left[\left. \xi(T_1,W_{11},Z_1; O_2) \right| O_1 \right] \cdot \frac{\Delta_1}{S_{D}(T_1 - Q_1)} \cdot \{\nu(T_1) - \thetatr\} \\
        &\quad + \E\left[\left.\frac{\Delta_2}{S_{D}(T_2 - Q_2)}\cdot \xi(T_2,W_{21},Z_2; O_1) \{\nu(T_2) - \thetatr\} \right| O_1 \right].
    \end{align*}
    By Assumption \ref{ass:cen_positivity} and the boundedness of $\nu$,  
    \begin{align}
        \E\{h_1(O)^2\}  
        & = \E\{h_1(O_1)^2\} \\
        & \lesssim \E\left[\left.\E\left\{\xi(T_1,W_{11},Z_1; O_2) \right| O_1\right\}^2 \right]  \nonumber\\
        &\quad + \E\left( \E\left[\left.\frac{\Delta_2}{S_{D}(T_2 - Q_2)}\cdot \xi(T_2,W_{21},Z_2; O_1) \{\nu(T_2) - \thetatr\} \right| O_1 \right]^2 \right)\\
        &\leq \E\left[\left.\E\left\{\xi(T_1,W_{11},Z_1; O_2)^2 \right| O_1\right\} \right]  \nonumber\\
        &\quad + \E\left( \E\left[\left.\left|\frac{\Delta_2}{S_{D}(T_2 - Q_2)}\right|^2\cdot \xi(T_2,W_{21},Z_2; O_1)^2 \{\nu(T_2) - \thetatr\}^2 \right| O_1 \right]\right) \label{eq:AN_proof_6} \\
        &\lesssim  \E\left\{\xi(T_1,W_{11},Z_1; O_2)^2 \right\}, \label{eq:AN_proof_7} \\
        & <\infty. \label{eq:AN_proof_10}
    \end{align}
    where \eqref{eq:AN_proof_6} holds by Jensen's inequality; \eqref{eq:AN_proof_7} holds by the boundedness of $\nu$, $\Delta/S_D(T-Q)\leq \eta_D^{-1}$ almost surely by Assumption \ref{ass:cen_positivity}, as well as the symmetric role of $O_1$ and $O_2$; and \eqref{eq:AN_proof_10} holds by Assumption \ref{ass:AL}.

\vspace{0.5em}
2) We now show that $\B_3$ also adopts an asymptotic linear decomposition.
We have
\begin{align*}
    \B_3  = \B_{31} + \B_{32},
\end{align*}
where
\begin{align*}
    \B_{31} & =  \frac{1}{n} \sum_{i=1}^n \left\{\frac{\Delta_i}{\hat S_{D}(T_i-Q_i)} - \frac{\Delta_i}{S_{D}(T_i-Q_i)} \right\} \btr(T_i,W_{1i},Z_i)\{\nu(T_i) - \thetatr\}, \\
    \B_{32} & =  \frac{1}{n} \sum_{i=1}^n \left\{\frac{\Delta_i}{\hat S_{D}(T_i -Q_i)} - \frac{\Delta_i}{S_{D}(T_i -Q_i)} \right\} \left\{\hat b(T_i,W_{1i},Z_i) - \btr(T_i,W_{1i},Z_i) \right\}\{\nu(T_i) - \thetatr\}. 
\end{align*}
Since the Kaplan-Meier estimator $\hat S_D$ is asymptotically linear with remainder term being  $o_p(n^{-1/2})$ uniformly in $t$ \citep{cai1998asymptotic}.
Using a similar proof as that for showing $\B_{2}$ is asymptotically linear, we can show that $\B_{31}$ is asymptotically linear,
i.e., there exists $h_3(O)$ with $\E\{h_3(O)\} = 0$ and $\E\{h_3(O)^2\} < \infty$ such that 
\begin{align*}
    \B_{31} & = \frac{1}{n} \sum_{i=1}^n h_3(O_i) + o_p(n^{-1/2}).
\end{align*}
Lastly, we show $\B_{32} = o_p(n^{-1/2})$. Since $\hat S_{D}$ and $\hat b$ are both asymptotically linear, 
with a similar proof as that in \citet[proof of Theorem S2 in the Supplementary Material for showing $B_2 = o_p(n^{-1/2})$]{wang2024doubly},
we can show that $\B_{32} = o_p(n^{-1/2})$. 
Combining the above, 
\begin{align}
    \B_{3} & = \frac{1}{n} \sum_{i=1}^n h_3(O_i) + o_p(n^{-1/2}). \label{eq:AN_proof_B3}
\end{align}

Finally, combining \eqref{eq:thm2_proof_err_decomp} \eqref{eq:AN_proof_B2} \eqref{eq:AN_proof_B3}, we have 
\begin{align}
    \frac{1}{n} \sum_{i=1}^n \frac{\Delta_i \hat b(T_i,W_{1i},Z_i)\{\nu(T_i) - \thetatr\} }{ \hat S_D(T_i-Q_i)}
    & = \frac{1}{n} \sum_{i=1}^n \phi(O_i) + o_p(n^{-1/2}),  \label{eq:AN_proof_11} 
\end{align}
where 
\begin{align*}
    \phi(O) = \frac{\Delta\btr(T,W_{1},Z)\{\nu(T) - \thetatr\}}{S_{D}(T- Q)} + h_2(O) + h_3(O).
\end{align*}
In the following, we verify that $\E\{\phi(O)\} = 0$ and $\E\{\phi(O)^2\} < \infty$, which together with central limit theorem and Slusky's theorem conclude the proof. 
By Lemma \ref{thm:identification_cen_after}, 
\begin{align*}
    E\left[\frac{\Delta\btr(T,W_{1},Z)\{\nu(T) - \thetatr\}}{S_{D}(T- Q)}\right] 
    = E\left\{\frac{\Delta\btr(T,W_{1},Z)\nu(T)}{S_{D}(T- Q)}\right\}  - \thetatr \cdot E\left\{\frac{\Delta\btr(T,W_{1},Z)}{S_{D}(T- Q)}\right\} = 0. 
\end{align*}
In addition, since $\nu$ is a bounded function, and $\Delta/S_D(T-Q)\leq \eta_D^{-1}$ almost surely by Assumption \ref{ass:cen_positivity}, 
\begin{align*}
    E\left[\frac{\Delta\btr(T,W_{1},Z)\{\nu(T) - \thetatr\}}{S_{D}(T- Q)}^2 \right] \lesssim \|b(T,W_1,Z)\|_{\sup} < \infty. 
\end{align*}
These together with $\E\{h_2(O)\} = \E\{h_3(O)\} = 0$, $\E\{h_2(O)^2\}<\infty$ and $\E\{h_3(O)^2\}<\infty$ imply that
$\E\{\phi(O)\} = 0$ and $\E\{\phi(O)^2\} < \infty$. 
Therefore, by central limit theorem, \eqref{eq:AN_proof_11} implies 
\begin{align*}
    \frac{1}{\sqrt{n}} \sum_{i=1}^n \frac{\Delta_i \hat b(T_i,W_{1i},Z_i)\{\nu(T_i) - \thetatr\} }{ \hat S_D(T_i-Q_i)} \convd N(0, \E\{\phi(O)^2\}). 
\end{align*}
The conclusion follows with $\sigmac^2 = \beta^2 \E\{\phi(O)^2\}$ by Slutsky's Theorem.
    
\end{proof}

\clearpage
\section{Additional details for HAAS data analysis}\label{app:HAAS}

\subsection{Data pre-processing for grip strength} \label{app:HAAS_preprocessing}

For grip strength, each subject was allowed 3 attempts for each hand, and we take the maximum of the measurements \citep{charles2006occupational}. As shown in Figure \ref{fig:HAAS_GripX1_AgeX1}, grip strength shows a clear declining trend with age, and the trend looks linear. 
Therefore, we standardized it by taking the residuals after fitting a linear model of grip strength on age.

\begin{figure}[h]
\centering
 \includegraphics[width=0.6\textwidth]{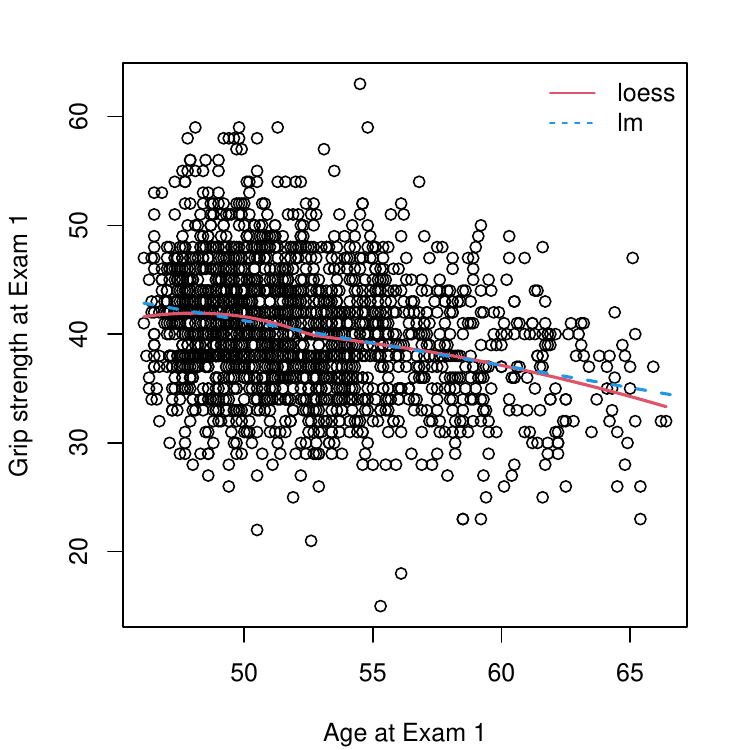}
\caption{Scatter plot, loess fit, and linear fit for grip strength measured at HHP baseline versus age at HHP baseline.}
\label{fig:HAAS_GripX1_AgeX1}
\end{figure}

\clearpage
\subsection{Covariate distributions}\label{app:HAAS_cov_distr}

\begin{table}[ht]
\centering
\caption{Covariate distribution in HAAS data.}
\label{tab:table1}
\begin{tabular}{lc}
  \toprule
  & Overall ($n=$1930) \\ 
  \midrule
  Education - count (\%)\\
    \quad $> 12$ years  & 418 (21.7) \\

  {\it APOE} genotype  - count (\%)\\
    \quad {\it APOE E4} positive & 369 (19.1) \\ 
  
  Alcohol consumption - count (\%)\\
    \quad Heavy drinker at Exam 1 & 272 (14.1) \\
    \quad Heavy drinker at Exam 3 & 366 (19.0) \\

  Cigarettes consumption - count (\%)\\
    \quad Smoker & 1371 (71.0) \\

  Systolic blood pressure - mean (SD) & 149.47 (21.97) \\ 
  
  Heart rate - mean (SD) & 31.56 (4.69) \\ 
  
  Grip strength (standardized) - mean (SD) & 0.00 (5.66) \\ 
  \bottomrule
\end{tabular}
\end{table}

\end{document}